\newtheorem{definition}{Definition}
\newtheorem{lemma}{Lemma}
\def\BibTeX{{\rm B\kern-.05em{\sc i\kern-.025em b}\kern-.08em
    T\kern-.1667em\lower.7ex\hbox{E}\kern-.125emX}}
\title{\LARGE CEV Framework: A Central Bank Digital Currency Evaluation and Verification Framework With a Focus on Consensus Algorithms and Operating Architectures}
\author{Si Yuan JIN, Yong Xia\IEEEauthorrefmark{1}\thanks{*Corresponding Author: Yong Xia (yong.xia@hsbc.com)
\\ 
This work has been accepted by the IEEE Access on 14 June 2022. Digital Object Identifier 10.1109/ACCESS.2022.3183092.
}
\\ HSBC Laboratory, Guangzhou 510440, China  \\}
\begin{document}

\thispagestyle{plain}
\pagestyle{plain}

\maketitle
\begin{abstract}
We propose a Central Bank Digital Currency Evaluation and Verification (CEV) Framework
for recommending and verifying technical solutions in the central bank digital currency (CBDC) system.
We demonstrate two sub-frameworks: an evaluation sub-framework that provides consensus algorithm and
operating architecture solutions and a verification sub-framework that validates the proposed solutions.
Our framework offers a universal CBDC solution that is compatible with different national economic
and regulatory regimes. The evaluation sub-framework generates customized solutions by splitting the
consensus algorithms into several components and analyzing their impacts on CBDC systems. CBDC design
involves a trade-off between system features - the consensus algorithm cannot achieve all system features
simultaneously. However, we also improve the operating architectures to compensate for the weak system
features. The verification sub-framework helps verify our proposed solution through empirical experiments
and formal proof. Our framework offers CBDC designers the flexibility to iteratively tune the trade-off
between CBDC system features for the desired solution. To the best of our knowledge, we are the first to
propose a framework to recommend and verify CBDC technical solutions.

\end{abstract}

\maketitle

\section{Introduction}
\label{sec:introduction}
The recent development in cryptography and distributed ledger technology (DLT) has seen a new form of currency known as Central Bank Digital Currency (CBDC) \cite{Theriseofdigitalcurrency}. More than 85\% of central banks worldwide have already started CBDC research \cite{b2, CBDCSurvey}. However, most current CBDC research works come from central banks, while only a few scientific papers discuss the CBDC-related research problems. For example, the papers \cite{BlockchainCBDCIEEE,blockchaincentralbank} used blockchain networks to provide CBDC services and propose a new consensus algorithm to satisfy CBDC technical features. However, these scientific papers only discuss some specific scenarios and have limitations to extending to a different scenario. 

The overall operating architecture\cite{BISFourModel} and consensus algorithms are core parts of CBDC technical solutions. The overall operating architecture defines different CBDC networks. Consensus algorithms define how the specific CBDC network functions and impacts many CBDC technical features. CBDC technical features\cite{groupCBDC} measure the technical focuses from different central banks. Here the consensus algorithm does not have to be applied in a blockchain network. Any network can apply a consensus algorithm to form data consistency. For example, China\cite{Yao} did not use blockchain to design its CBDC prototype, but we still consider the way to form data consistency as one kind of consensus algorithm. Due to diverse national conditions, central banks need different consensus algorithms and operating architectures to satisfy their CBDC technical features.

\subsection{Our Contribution}
Our paper reviewed previous CBDC solutions and proposed a framework that provides overall operating architecture and customized consensus algorithms to satisfy different CBDC technical features. Section II shows three CBDC technical features: performance, security and privacy. Compared with previous works, we have the following contributions:
\begin{enumerate}
    \item We propose a framework to recommend and verify CBDC related technical solutions in consensus algorithms and operating architectures.
    \item We are the first to split consensus algorithms into different components, significantly improving the efficiency to design customized consensus algorithms.
    \item We improve the CBDC operating architecture to solve the business secrecy issue.
\end{enumerate}

Specifically, we propose an evaluation sub-framework that provides holistic CBDC solutions covering CBDC technical features in Section III-A and build a verification sub-framework to verify the feasibility and rationality of proposed solutions in Section III-B. Finally, we integrate both sub-frameworks into one framework, called CEV Framework. 

The evaluation sub-framework involves the consensus algorithm and operating architecture. Consensus algorithm works for forming data consistency among participants \cite{RAFT}. It impacts many CBDC technical features directly, including performance, privacy and security. For example, the paper \cite{hybrid} studied how blockchain empowers CBDC and proposed a new consensus algorithm to improve CBDC performance. However, consensus algorithms have a highly complex impact on CBDC technical features. In order to better analyze consensus algorithms, we split them into different components so that we can derive the impacts of each component on CBDC technical features. 

A trade-off \cite{groupCBDC} between CBDC technical features exists in implementing consensus algorithms, which means we can not achieve all simultaneously. However, we can improve the operating architecture to compensate for weak CBDC technical features. Our paper uses a new operating architecture to solve the business secrecy issue (details in Section II-B).

The verification sub-framework can guide CBDC designers to verify proposed solutions. CBDC designers need to build a mathematical model for the solution and verify whether it can meet initial expectations on diverse CBDC technical features, like high performance. If they need further adjusting preference on three CBDC technical features, they can go back to the evaluation sub-framework to adjust the previous solution again.

We then introduce a CBDC scenario to demonstrate the CEV framework in Section IV. We used the evaluation sub-framework to propose customized consensus algorithms and followed the verification sub-framework to build a model and verify related CBDC technical features. We used empirical experiments to test performance and leveraged formal proofs to verify security and privacy. Our framework can give a clear guide to satisfy CBDC technical features for CBDC designers despite different national economic and regulatory conditions (details in Section III.A.1).

\subsection{Paper Structure}
The remainder of this paper is organized as follows. In Section II, we present the background of the research and three CBDC technical features. Section III introduces the CEV framework, including an evaluation sub-framework and a verification sub-framework. Section IV gives an example of leveraging the framework to develop a solution and verify it. Finally, we conclude the paper in Section V.

\section{Background}
\subsection{Blockchain and Consensus Algorithm}
Blockchain has shown many benefits among current CBDC projects worldwide \cite{Currencies2020, CentreforEconomicPolicyResearchCEPR2019, BlockchainCBDCACM, BlockchainCBDCIEEE, blockchainframework}. For example, in cross-border business, peer-to-peer payment could save liquidity and improve efficiency \cite{BoT2020}. Research topics about blockchain appear one after another, especially in the performance part\cite{cachin2016architecture, Brown2016,eyal2016bitcoin,poon2016bitcoin}, since the performance of blockchain, for example, bitcoin \cite{nakamoto2012bitcoin}, cannot meet today's commercial needs. Besides performance, other CBDC technical features are also undergoing research, like resilience, security and privacy. 

Consensus algorithms play roles in many blockchains. Fabric \cite{cachin2016architecture} used Practical Byzantine Fault Tolerance (PBFT) \cite{castro1999practical} which provides fault tolerance while sacrificing part of performance. Corda blockchain protocol aims to satisfy finance and regulatory requirements\cite{Brown2016}. Transactions in the Corda platform are recorded only by participants rather than the entire network, which provides high performance and protects privacy while sacrificing part of security. 

\subsection{Tiered Architecture and Business Secrecy Issue}
Most CBDC pilots have shown a tiered architecture \cite{BISFourModel} which plays roles in many CBDC projects, including China's E-CNY \cite{designchoice}, Sweden's E-Krona \cite{sweden3}. Figure \ref{fig:ConsensusNetworks} shows a typical tiered CBDC architecture. We define a consensus network as the network which consensus algorithms run.

\begin{figure}[h]
	\centering
	\includegraphics[width=0.50\textwidth]{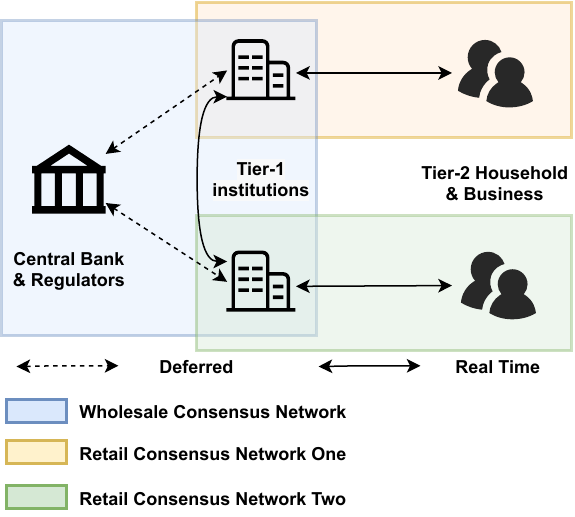}
	\caption{\textbf{A two-tier architecture that consensus algorithms run separately in different consensus networks. The wholesale consensus network involves central banks and tier-1 institutions and handles wholesale transactions between tier-1 institutions and central banks; the retail consensus networks involve retail clients and tier-1 institutions and handle retail transactions between tier-2 households \& business and tier-1 institutions.}\label{fig:ConsensusNetworks}}
\end{figure}

In a two-tier CBDC architecture, tier-1 institutions directly connect to the central bank (tier-0), and tier-2 institutions directly connect to tier-1 institutions. Tier-1 institutions take the responsibility of distribution\footnote{Distribution means that an institution helps the central bank issue CBDC and manage CBDC authentication work.} in a two-tier CBDC architecture. In most CBDC projects, commercial banks become tier-1 institutions. However, it is impossible to let all commercial institutions become tier-1 and responsible for CBDC distribution and circulation\footnote{Circulation means that an institution provides CBDC-related transfer services.} because the central bank can not afford too many banks to connect simultaneously. At the same time, it has a potential single point (central bank) failure risk and performance bottleneck. Therefore, in most countries, the most influential banks usually became tier-1 institutions.

The more commercial institutions circulate CBDC, the more areas CBDC services cover. Tier-2 institutions have to connect to tier-1 institutions to provide CBDC services. The two-tier model can mitigate business secrecy-related concerns if tier-2 institutions and tier-1 institutions are non-competitors. However, tier-2 institutions and tier-1 institutions are mostly competitors, and they are reluctant to provide transaction and customer information to their competitors. For example, if all tier-1 institutions are banks, tier-2 banks are worried that tier-1 banks monopolize their customer data. The business secrecy issue makes two-tier architecture hard to implement in a CBDC system. 

Current technical solutions to keep business secrecy, such as homomorphic encryption \cite{maturesolutionprivacy}, however, could not satisfy CBDC technical features because it influences performance a lot. Therefore, we propose new operating architectures to improve CBDC business secrecy (details in Section III-A.2).

\subsection{CBDC technical features}
CBDC technical features \cite{groupCBDC} measure CBDC-related considerations for designers and regulators. CBDC white papers presented many differences between jurisdictions regarding national conditions, and central banks focus on different CBDC technical features. For example, Singapore's Ubin \cite{b9}, and Canada's Jasper \cite{b7} focuses on transaction settlement between different countries; China's E-CNY \cite{designchoice} emphasizes the volume of transactions per second in retail transactions. CBDC designers across different jurisdictions have varying approaches to satisfy CBDC technical features. 

We rest on the previous research \cite{b3,b6,Security,Boar2020, b7,b9,b10,b12,b13,retailCBDCtechnology,retailimf,b15} and extracted following CBDC technical features. To avoid ambiguity, we use specific sub-categories to define the following categories.
 
\subsubsection{Performance}
Blockchain has many benefits and has been widely used in the wholesale CBDC, but it seldom appears in retail CBDC projects\cite{m-cbdc}. One key factor is that its weak scalability cannot meet high performance.
In CBDC scenarios, millions, even billions of customers may use CBDC, which requires a high performance to handle billions of requests. Therefore, we consider the following features to measure performance.

\begin{enumerate}
\item User Scalability: the cost of adding a new customer to a CBDC system. 
\item Network Scalability: the capability to handle larger transaction volumes per second(TPS).
\item Latency: the time to complete one transaction.
\end{enumerate}
We used empirical experiments to examine performance in the verification sub-framework and gave an example in Section IV-B.2.

\subsubsection{Security}
Security in a distributed system involves various aspects, including cryptography, secure channels, key management, prevention of double-spending attacks \cite{distributedcomputingbook}. The prevention of double-spending is one of the basic requirements in a CBDC system and maintains financial stability and reliability. Central banks usually put security as the priority. As one kind of new form currency, CBDC has many security risks. 

\begin{enumerate}
\item Cyber-Security: capability of protecting against outside attacks, especially double-spending attacks.
\item Resilience: capability of protecting against hardware issues, power or network outages, or cloud service interruption\cite{retailimf}.
\end{enumerate}
We used formal proof to verify potential security threats in the verification sub-framework and gave an example in Section IV-B.4.
\subsubsection{Privacy}
We divide privacy into two aspects, customer privacy and business secrecy\cite{retailimf}.
\begin{enumerate}
\item Customer privacy protects customer data against others.
\item Business secrecy prevents business data from leaking to business competitors. second(TPS).
\end{enumerate}
Our evaluation sub-framework improved the current operating architecture to protect business secrecy in Section III-A.2 since the consensus algorithm can not simultaneously achieve all CBDC technical features. We used formal proof to verify privacy protection in the verification sub-framework and gave an example of Section IV-B.3.

\subsubsection{Others}
Other CBDC technical features do not conflict with the above ones. Examples include governance \cite{retailimf}, functionality, interoperability and offline payments. We believe these requirements can be met or solved independently in the current financial system. For example, offline payment does not conflict with the above CBDC technical features. So we ignored them in the following parts. In future, we can involve more CBDC technical features in the CEV framework if needed.

\section{CEV Framework}
\begin{figure*}[h]
	\centering
	\includegraphics[width=1.0\textwidth]{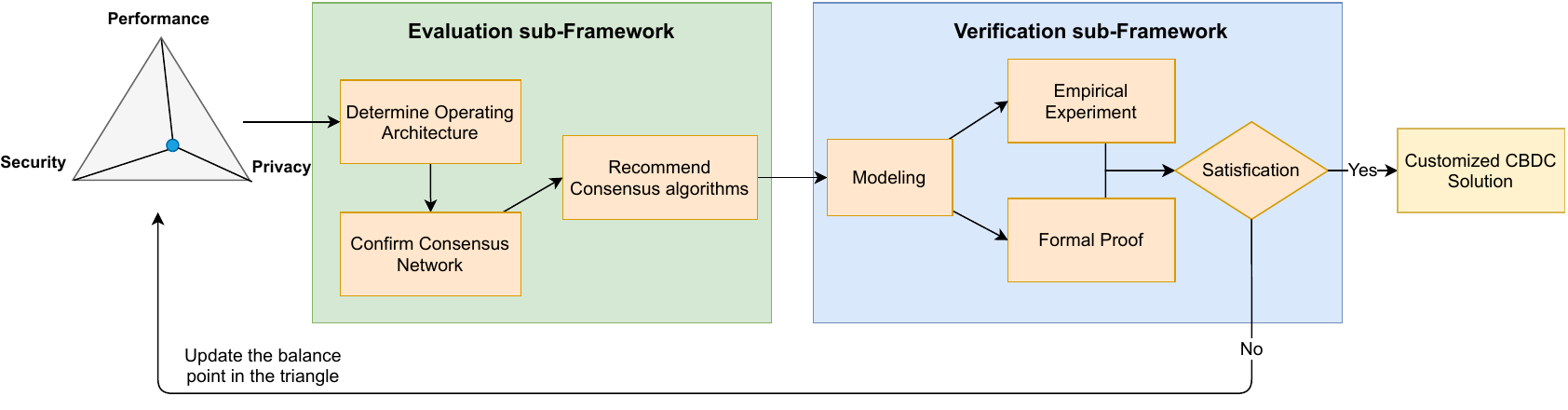}
	\caption{\textbf{A closed-loop workflow of CEV Framework for CBDC designers.} \label{fig:General Workflow}}
\end{figure*}

The CEV (CBDC Evaluation and Verification) framework includes two sub-frameworks: an evaluation sub-framework that provides CBDC solutions and a verification sub-framework that proves the feasibility and rationality of recommended solutions. 

Figure \ref{fig:General Workflow} shows the working procedures of the CEV framework. First, CBDC designers determine preferable CBDC technical features according to their economic and regulatory conditions. Then the evaluation sub-framework firstly determine the operating architecture to ensure how many consensus networks and the relationship between them. Secondly, the evaluation sub-framework recommends consensus algorithms in different consensus networks. Afterwards, The verification sub-framework can guide CBDC designers to build a theoretical model for the solution and carry out experiments and proofs to verify whether the solution meets the original CBDC technical features. Finally, suppose they are not satisfied with the solution. In that case, they can go back to adjust their preference on CBDC technical features, leverage the evaluation sub-framework to update their solutions, and use the verification sub-framework to check the proposed solutions again.

\subsection{Evaluation Sub-framework}
The evaluation sub-framework includes two parts: the consensus algorithm part splits consensus algorithms into different components, and the operating architecture part introduces the overall architecture that consensus algorithms run. Next, we introduce how we recommend consensus algorithms and improve the overall operating architecture.

\subsubsection{Consensus algorithm}
Consensus networks have diverse implementations of the consensus algorithm. For example, in the operating architecture (figure \ref{fig:ConsensusNetworks}), the central bank can control the wholesale balance of issued CBDC rather than recording every retail transaction to avoid double-spending \cite{distributedcomputingbook}. The central bank is only responsible for issuance and redemption transactions. If any issue exists in retail transactions, corresponding tier-1 institutions should be accountable. On the other side, the central bank records every wholesale transaction to keep it safe. Then both the wholesale and retail transactions are safe from double-spending from the perspective of central banks.

\begin{figure*}[h]
	\centering
	\includegraphics[width=1.00\textwidth]{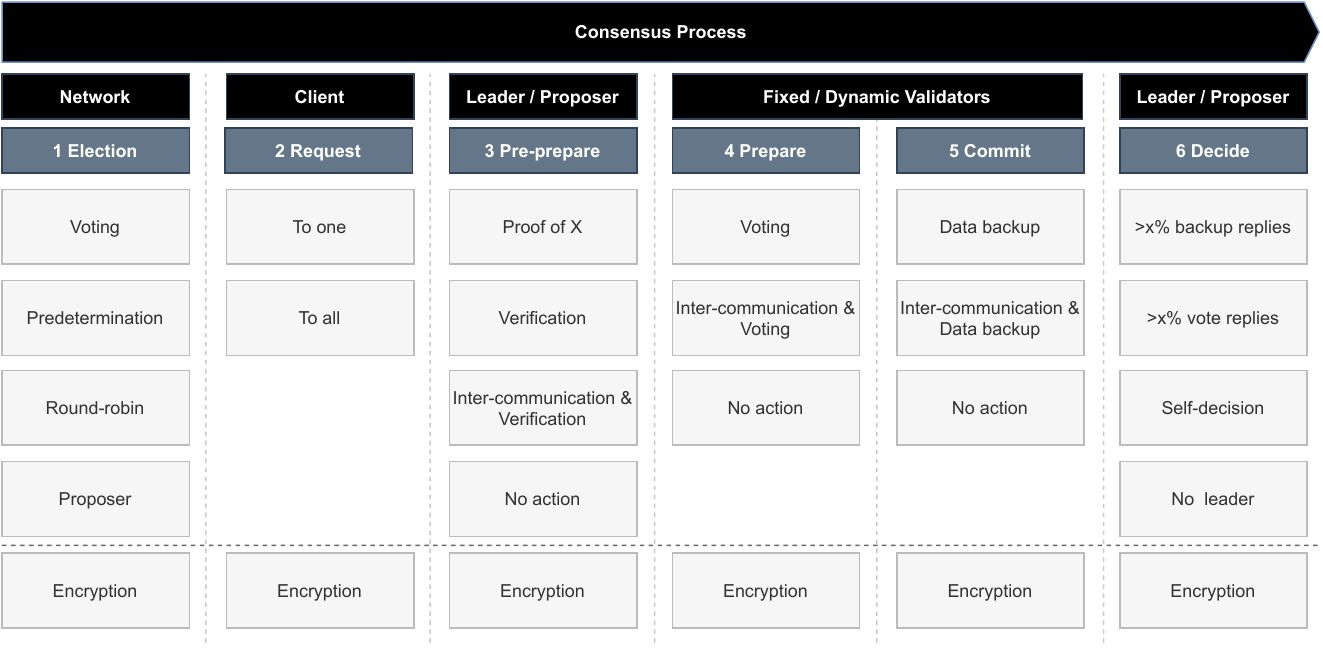}
	\caption{\textbf{Consensus Algorithm Process Map. A CBDC client sends a request to the CBDC system, and the system processes the request until reaching an agreement inside the network. We split the process into several parts to better analyze each one.} \label{fig:ConsensusModularization}}
\end{figure*}
\begin{table*}[h]
\centering
\caption{\textbf{Impact of Individual Consensus Components. The first column is consensus components. The first row is CBDC-related CBDC technical features.}}
\label{Table:Impact}
\begin{tabular}{cc|ccc|cc|cc}
\hline
\multicolumn{2}{c|}{\multirow{2}{*}{\textbf{Modules}}}                                                                                                                                         & \multicolumn{3}{c|}{\textbf{Performance}}                                                                                                                                                   & \multicolumn{2}{c|}{\textbf{Security}}                               & \multicolumn{2}{c}{\textbf{Privacy}}                                                                                                          \\ \cline{3-9} 
\multicolumn{2}{c|}{}                                                                                                                                                                          & \multicolumn{1}{c|}{\begin{tabular}[c]{@{}c@{}}\textbf{User}\\ \textbf{Scalability}\end{tabular}} & \multicolumn{1}{c|}{\begin{tabular}[c]{@{}c@{}}\textbf{Network}\\ \textbf{Scalability}\end{tabular}} & \textbf{Latency}     & \multicolumn{1}{c|}{\textbf{Resilience}}   & \textbf{Cyber-Security} & \multicolumn{1}{c|}{\begin{tabular}[c]{@{}c@{}}\textbf{Customer}\\ \textbf{Privacy}\end{tabular}} & \begin{tabular}[c]{@{}c@{}}\textbf{Business}\\  \textbf{Secrecy}\end{tabular} \\ \hline
\multicolumn{1}{c|}{\multirow{5}{*}{\begin{tabular}[c]{@{}c@{}}1 Leader Election\\  / Proposer\end{tabular}}} & Voting                                                                         & \multicolumn{1}{c|}{\multirow{3}{*}{TBA}}                                       & \multicolumn{1}{c|}{Medium}                                                        & Medium               & \multicolumn{1}{c|}{\multirow{11}{*}{TBA}} & High                    & \multicolumn{1}{c|}{\multirow{2}{*}{TBA}}                                       & TBA                                                         \\ \cline{2-2} \cline{4-5} \cline{7-7} \cline{9-9} 
\multicolumn{1}{c|}{}                                                                                         & Predetermination                                                               & \multicolumn{1}{c|}{}                                                           & \multicolumn{1}{c|}{High}                                                          & High                 & \multicolumn{1}{c|}{}                      & \multirow{8}{*}{TBA}    & \multicolumn{1}{c|}{}                                                           & High                                                        \\ \cline{2-2} \cline{4-5} \cline{8-9} 
\multicolumn{1}{c|}{}                                                                                         & Round-robin                                                                    & \multicolumn{1}{c|}{}                                                           & \multicolumn{1}{c|}{High}                                                          & High                 & \multicolumn{1}{c|}{}                      &                         & \multicolumn{1}{c|}{Low}                                                        & Medium                                                      \\ \cline{2-5} \cline{8-9} 
\multicolumn{1}{c|}{}                                                                                         & Proposer                                                                       & \multicolumn{1}{c|}{High}                                                       & \multicolumn{1}{c|}{High}                                                          & Low                  & \multicolumn{1}{c|}{}                      &                         & \multicolumn{1}{c|}{\multirow{2}{*}{TBA}}                                       & \multirow{7}{*}{TBA}                                        \\ \cline{2-5}
\multicolumn{1}{c|}{}                                                                                         & Encryption                                                                     & \multicolumn{1}{c|}{TBA}                                                        & \multicolumn{1}{c|}{TBA}                                                           & Low                  & \multicolumn{1}{c|}{}                      &                         & \multicolumn{1}{c|}{}                                                           &                                                             \\ \cline{1-5} \cline{8-8}
\multicolumn{1}{c|}{\multirow{3}{*}{2 Request}}                                                               & To one                                                                         & \multicolumn{1}{c|}{High}                                                       & \multicolumn{1}{c|}{High}                                                          & Low                  & \multicolumn{1}{c|}{}                      &                         & \multicolumn{1}{c|}{Medium}                                                     &                                                             \\ \cline{2-5} \cline{8-8}
\multicolumn{1}{c|}{}                                                                                         & To all                                                                         & \multicolumn{1}{c|}{Low}                                                        & \multicolumn{1}{c|}{Medium}                                                        & High                 & \multicolumn{1}{c|}{}                      &                         & \multicolumn{1}{c|}{Low}                                                        &                                                             \\ \cline{2-5} \cline{8-8}
\multicolumn{1}{c|}{}                                                                                         & Encryption                                                                     & \multicolumn{1}{c|}{\multirow{2}{*}{TBA}}                                       & \multicolumn{1}{c|}{\multirow{2}{*}{TBA}}                                          & Low                  & \multicolumn{1}{c|}{}                      &                         & \multicolumn{1}{c|}{\multirow{3}{*}{TBA}}                                       &                                                             \\ \cline{1-2} \cline{5-5}
\multicolumn{1}{c|}{\multirow{5}{*}{3 Pre-Prepare}}                                                           & Proof of X                                                                     & \multicolumn{1}{c|}{}                                                           & \multicolumn{1}{c|}{}                                                              & TBA                  & \multicolumn{1}{c|}{}                      &                         & \multicolumn{1}{c|}{}                                                           &                                                             \\ \cline{2-5} \cline{7-7}
\multicolumn{1}{c|}{}                                                                                         & Verification                                                                   & \multicolumn{1}{c|}{High}                                                       & \multicolumn{1}{c|}{High}                                                          & High                 & \multicolumn{1}{c|}{}                      & High                    & \multicolumn{1}{c|}{}                                                           &                                                             \\ \cline{2-5} \cline{7-9} 
\multicolumn{1}{c|}{}                                                                                         & \begin{tabular}[c]{@{}c@{}}Inter-communication\\  \& Verification\end{tabular} & \multicolumn{1}{c|}{Low}                                                        & \multicolumn{1}{c|}{Low}                                                           & Low                  & \multicolumn{1}{c|}{}                      & High                    & \multicolumn{1}{c|}{Low}                                                        & Low                                                         \\ \cline{2-9} 
\multicolumn{1}{c|}{}                                                                                         & No action                                                                      & \multicolumn{1}{c|}{High}                                                       & \multicolumn{1}{c|}{High}                                                          & High                 & \multicolumn{1}{c|}{Medium}                & Low                     & \multicolumn{1}{c|}{Medium}                                                     & Medium                                                      \\ \cline{2-9} 
\multicolumn{1}{c|}{}                                                                                         & Encryption                                                                     & \multicolumn{2}{c|}{TBA}                                                                                                                                             & Low                  & \multicolumn{1}{c|}{\multirow{5}{*}{TBA}}  & \multirow{9}{*}{TBA}    & \multicolumn{1}{c|}{High}                                                       & High                                                        \\ \cline{1-5} \cline{8-9} 
\multicolumn{1}{c|}{\multirow{4}{*}{4 Prepare}}                                                               & Voting                                                                         & \multicolumn{1}{c|}{Medium}                                                     & \multicolumn{1}{c|}{Medium}                                                        & Medium               & \multicolumn{1}{c|}{}                      &                         & \multicolumn{1}{c|}{Low}                                                        & TBA                                                         \\ \cline{2-5} \cline{8-9} 
\multicolumn{1}{c|}{}                                                                                         & \begin{tabular}[c]{@{}c@{}}Inter-communication\\  \& Verification\end{tabular} & \multicolumn{1}{c|}{Low}                                                        & \multicolumn{1}{c|}{Low}                                                           & Low                  & \multicolumn{1}{c|}{}                      &                         & \multicolumn{1}{c|}{Low}                                                        & Low                                                         \\ \cline{2-5} \cline{8-9} 
\multicolumn{1}{c|}{}                                                                                         & No action                                                                      & \multicolumn{1}{c|}{\multirow{2}{*}{TBA}}                                       & \multicolumn{1}{c|}{\multirow{2}{*}{TBA}}                                          & TBA                  & \multicolumn{1}{c|}{}                      &                         & \multicolumn{1}{c|}{\multirow{12}{*}{TBA}}                                      & TBA                                                         \\ \cline{2-2} \cline{5-5} \cline{9-9} 
\multicolumn{1}{c|}{}                                                                                         & Encryption                                                                     & \multicolumn{1}{c|}{}                                                           & \multicolumn{1}{c|}{}                                                              & Low                  & \multicolumn{1}{c|}{}                      &                         & \multicolumn{1}{c|}{}                                                           & High                                                        \\ \cline{1-6} \cline{9-9} 
\multicolumn{1}{c|}{\multirow{4}{*}{5 Commit}}                                                                & Data Backup                                                                    & \multicolumn{1}{c|}{Medium}                                                     & \multicolumn{1}{c|}{Medium}                                                        & Medium               & \multicolumn{1}{c|}{High}                  &                         & \multicolumn{1}{c|}{}                                                           & \multirow{4}{*}{TBA}                                        \\ \cline{2-6}
\multicolumn{1}{c|}{}                                                                                         & \begin{tabular}[c]{@{}c@{}}Inter-communication\\  \& Verification\end{tabular} & \multicolumn{1}{c|}{Low}                                                        & \multicolumn{1}{c|}{Low}                                                           & Low                  & \multicolumn{1}{c|}{High}                  &                         & \multicolumn{1}{c|}{}                                                           &                                                             \\ \cline{2-6}
\multicolumn{1}{c|}{}                                                                                         & No action                                                                      & \multicolumn{1}{c|}{\multirow{2}{*}{TBA}}                                       & \multicolumn{1}{c|}{\multirow{4}{*}{TBA}}                                          & TBA                  & \multicolumn{1}{c|}{TBA}                   &                         & \multicolumn{1}{c|}{}                                                           &                                                             \\ \cline{2-2} \cline{5-6}
\multicolumn{1}{c|}{}                                                                                         & Encryption                                                                     & \multicolumn{1}{c|}{}                                                           & \multicolumn{1}{c|}{}                                                              & Low                  & \multicolumn{1}{c|}{TBA}                   &                         & \multicolumn{1}{c|}{}                                                           &                                                             \\ \cline{1-3} \cline{5-7} \cline{9-9} 
\multicolumn{1}{c|}{\multirow{4}{*}{6 Decide}}                                                                & \textgreater{}1/x Vote                                                         & \multicolumn{1}{c|}{Medium}                                                     & \multicolumn{1}{c|}{}                                                              & \multirow{4}{*}{TBA} & \multicolumn{1}{c|}{High}                  & High                    & \multicolumn{1}{c|}{}                                                           & Low                                                         \\ \cline{2-3} \cline{6-7} \cline{9-9} 
\multicolumn{1}{c|}{}                                                                                         & \textgreater{}1/x Backup                                                       & \multicolumn{1}{c|}{Meidum}                                                     & \multicolumn{1}{c|}{}                                                              &                      & \multicolumn{1}{c|}{High}                  & Medium                  & \multicolumn{1}{c|}{}                                                           & Medium                                                      \\ \cline{2-4} \cline{6-7} \cline{9-9} 
\multicolumn{1}{c|}{}                                                                                         & Self-decision                                                                  & \multicolumn{1}{c|}{High}                                                       & \multicolumn{1}{c|}{High}                                                          &                      & \multicolumn{1}{c|}{Low}                   & Low                     & \multicolumn{1}{c|}{}                                                           & High                                                        \\ \cline{2-4} \cline{6-7} \cline{9-9} 
\multicolumn{1}{c|}{}                                                                                         & No leader                                                                      & \multicolumn{1}{c|}{High}                                                       & \multicolumn{1}{c|}{Low}                                                           &                      & \multicolumn{1}{c|}{High}                  & High                    & \multicolumn{1}{c|}{}                                                           & Medium                                                      \\ \cline{1-7} \cline{9-9} 
\multicolumn{1}{c|}{\multirow{2}{*}{Validators}}                                                              & Fixed                                                                          & \multicolumn{1}{c|}{\multirow{2}{*}{TBA}}                                       & \multicolumn{1}{c|}{Low}                                                           & Low                  & \multicolumn{1}{c|}{High}                  & High                    & \multicolumn{1}{c|}{}                                                           & \multirow{2}{*}{TBA}                                        \\ \cline{2-2} \cline{4-7}
\multicolumn{1}{c|}{}                                                                                         & Dynamic                                                                        & \multicolumn{1}{c|}{}                                                           & \multicolumn{1}{c|}{High}                                                          & High                 & \multicolumn{1}{c|}{Medium}                & Medium                  & \multicolumn{1}{c|}{}                                                           &                                                             \\ \hline
\end{tabular}

\end{table*}

Consensus algorithms can satisfy different CBDC technical features at different levels with a trade-off \cite{groupCBDC}. As a result, they have direct but complex impacts on the CBDC technical features in Section II. 

We reviewed many consensus algorithms\cite{nakamoto2012bitcoin, RAFT, Brown2016,castro1999practical, ethereum,ibft} and found that only part of the difference leads to different consensus algorithms and applications. For example, IBFT\cite{ibft} adopts dynamic set of validators compared with fixed set of validators in PBFT\cite{castro1999practical}. We base on the differences to split the consensus process into different modules. Then we can better analyze the individual impacts on CBDC technical features. Then we combined different components into one algorithm, resulting in its overall impact.

Figure \ref{fig:ConsensusModularization} introduces consensus algorithm components. The following steps describe details about the process:

\begin{enumerate}
    \item Network - Election: a network requires one representative to lead the consensus process before a client sends a transaction request.
    \begin{enumerate}
    \item Voting: the system votes for the leaders. Extensive voting mechanisms involve defining the percentage of votes to become a leader and other requirements. RAFT \cite{RAFT} adopts an election timeout in the voting mechanism to determine the network's leader.
    \item Predetermination: the system predetermines the leaders. For example, the notary node in Corda platform \cite{Brown2016} is determined before network deployment. 
    \item Round-robin: A group of nodes take turns as leaders in a certain order.
    PBFT \cite{castro1999practical} uses round-robin approach to choose the primary (leader).
    \item Proposer: the system has no leader. For example, in some public blockchain systems, a proposer collects transactions from users and proposes them to the network via Proof of Work \cite{nakamoto2012bitcoin}, Proof of Stake \cite{ethereum}. 
    \end{enumerate}
    \item Client - Request: a client submits its payment request to the network. 
    \begin{enumerate}
        \item To one: a client sends the request to only one node. A node is a connection point in a communication network. For example, in RAFT, the client sends the request to another node if they receive no response. Furthermore, the "To one" option can leverage sharding \cite{shard1, shard2} to improve system performance. Sharding means multiple nodes process transactions in parallel and interacts with each other in a specific manner, which we discuss in the example of Section IV.
        \item To all: a client sends the request to all nodes to ensure one node accepts the request in time. 
    \end{enumerate}
    \item Leader / Proposer - Pre-prepare: the leader node or proposer processes the request locally after receiving it.
    \begin{enumerate}
        \item Proof of X: the leader node or proposer leverages one of Proof of X, such as Proof of Work \cite{nakamoto2012bitcoin}, to publish transactions.
        \item Verification: the leader verifies proposed transactions in a specific manner, like checking the input equals the output.
        \item Inter-communication \& Verification: the leader communicates with other nodes and filters transactions. Filtration can reduce the illegal transactions.
    \end{enumerate}
    \item Validators - Prepare: validators vote and communicate with others to verify the request from the leader node.
    \begin{enumerate}
    \item Voting: validators vote for the request.
    \item Inter-communication \& Voting: validators communicate with each other and vote for the request. PBFT \cite{castro1999practical} adopts this method to prevent malicious behaviors.
     \item No action: validators do nothing. For example, validators (called followers) in RAFT \cite{RAFT} do nothing.
    \end{enumerate}
    \item Validators - Commit: validators record transactions in their database. They could directly undertake data backup, like RAFT \cite{RAFT}, or communicate with each other and then determine whether to record the proposed request, like PBFT \cite{castro1999practical}.
    \begin{enumerate}
       \item Data backup: validators make a backup in their local databases.
       \item Inter-communication \& Data backup: validators communicate before backup.
    \end{enumerate}
    \item Leader / Proposer - Decide: the leader finalizes the request.  
    \begin{enumerate}
        \item x\% backup replies: the leader node receives more than x\% backup replies. For example, in RAFT\cite{RAFT}, the leader node needs to receive 50\% replies before responding to its client.
        \item x\% vote replies: the leader node receives more than x\% vote replies then finalizes the transaction.
        \item Self-decision: the leader node decides the transaction by itself. For example, the notary node in Corda \cite{Brown2016} verifies proposed transactions by itself.
    \end{enumerate}
    \end{enumerate}
    
Besides consensus process options, we include another two options to improve the overall algorithm:
    \begin{enumerate}
        \item Fixed / Dynamic Validators: validators are non-leader nodes that can participate in consensus. Some consensus algorithms require all nodes to participate in consensus, while others need selected or random dynamic nodes. For example, RAFT \cite{RAFT} and PBFT \cite{castro1999practical} need all nodes to participate data backup, while IBFT \cite{ibft} adopts a dynamic set of validators. A dynamic set of validators can provide a higher performance because of more flexible participants. In contrast,  fixed validators are easier to implement and more secure.
    \item Encryption: CBDC designers can use encryption in every step to secure transmitted information but decrease performance. It is independent of previous options. In most algorithms, consensus algorithms are more associated with achieving consistency and ignore encryption. However, in the CBDC scenario, encryption plays a role, so we include it in the consensus process.
\end{enumerate}


Each component has many extensions, and components have constraints between each other. For example, "proof of X" in the third step is possibly connected with "proposer" in the first step. 

Table \ref{Table:Impact} shows how the above components influence the mentioned CBDC technical features in Section II. We measure the impact through categories of High\footnote{High means the module positively impacts the dimension}, Medium\footnote{Medium means the module has no impact or relative medium impact} and Low\footnote{Low means the module has a relatively harmful impact on the dimension}. TBA indicates to be further analyzed. Every component has its impact on every CBDC dimension. We measure these impacts by empirical experiments and formal proofs. Different components together can build one consensus algorithm with the same weight. If needed, they could have different weights of impacts on different CBDC technical features. 

\begin{figure*}[h]
	\centering
	\includegraphics[width=1.0\textwidth]{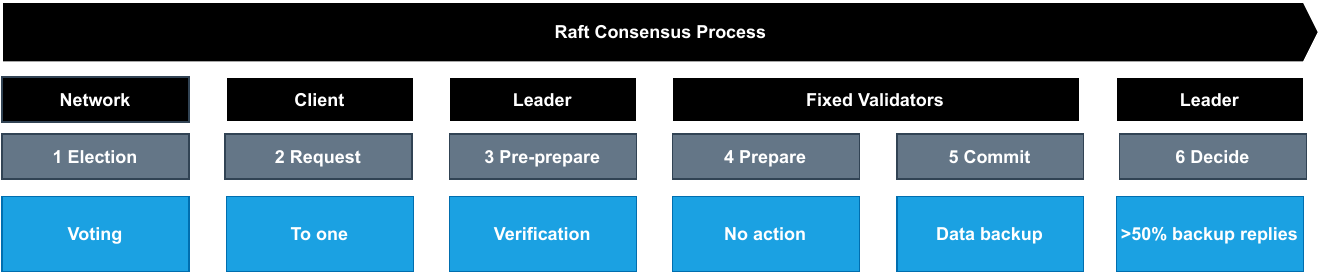} 
	\caption{\textbf{Consensus Process of RAFT starts from voting and election timeout mechanism in the leader election. Then a client sends a request to the leader in the network. After the leader's verification, validators make a backup and respond to the leader node. Then this transaction will be regarded as valid when receiving more than half of the notices.}  \label{fig:RAFT}}
\end{figure*}

\begin{table*}[h]
\centering
\caption{\textbf{Impact of RAFT on CBDC technical features. Total adds all values in the table with High (+1), Medium or TBA (0), Low (-1). }}
\label{Table:RAFT}
\begin{tabular}{cc|ccc|cc|cc}
\hline
\multicolumn{2}{c|}{\multirow{2}{*}{\textbf{Modules}}}             & \multicolumn{3}{c|}{\textbf{Performance}}                                                                             & \multicolumn{2}{c|}{\textbf{Security}}                             & \multicolumn{2}{c}{\textbf{Privacy}}                                                                                                         \\ \cline{3-9} 
\multicolumn{2}{c|}{}                                              & \multicolumn{1}{c|}{\textbf{User Scalability}} & \multicolumn{1}{c|}{\textbf{Network Scalability}} & \textbf{Latency} & \multicolumn{1}{c|}{\textbf{Resilience}} & \textbf{Cyber-Security} & \multicolumn{1}{c|}{\begin{tabular}[c]{@{}c@{}}\textbf{Customer}\\ \textbf{Privacy}\end{tabular}} & \begin{tabular}[c]{@{}c@{}}\textbf{Business}\\ \textbf{Secrecy}\end{tabular} \\ \hline
\multicolumn{1}{c|}{1 Leader Election} & Voting                    & \multicolumn{1}{c|}{TBA}                       & \multicolumn{1}{c|}{Medium}                       & Medium           & \multicolumn{1}{c|}{TBA}                 & High                    & \multicolumn{1}{c|}{TBA}                                                        & TBA                                                        \\ \hline
\multicolumn{1}{c|}{2 Request}         & To one                    & \multicolumn{1}{c|}{High}                      & \multicolumn{1}{c|}{High}                         & Low              & \multicolumn{1}{c|}{TBA}                 & TBA                     & \multicolumn{1}{c|}{Medium}                                                     & TBA                                                        \\ \hline
\multicolumn{1}{c|}{3 Pre-Prepare}     & Verification              & \multicolumn{1}{c|}{High}                      & \multicolumn{1}{c|}{High}                         & High             & \multicolumn{1}{c|}{TBA}                 & TBA                     & \multicolumn{1}{c|}{TBA}                                                        & TBA                                                        \\ \hline
\multicolumn{1}{c|}{4 Prepare}         & No action                 & \multicolumn{1}{c|}{TBA}                       & \multicolumn{1}{c|}{TBA}                          & TBA              & \multicolumn{1}{c|}{TBA}                 & TBA                     & \multicolumn{1}{c|}{TBA}                                                        & TBA                                                        \\ \hline
\multicolumn{1}{c|}{5 Commit}          & Data Backup               & \multicolumn{1}{c|}{Medium}                    & \multicolumn{1}{c|}{Medium}                       & Medium           & \multicolumn{1}{c|}{High}                & TBA                     & \multicolumn{1}{c|}{TBA}                                                        & TBA                                                        \\ \hline
\multicolumn{1}{c|}{6 Decide}          & \textgreater{}50\% Backup & \multicolumn{1}{c|}{Meidum}                    & \multicolumn{1}{c|}{TBA}                          & TBA              & \multicolumn{1}{c|}{High}                & Medium                  & \multicolumn{1}{c|}{TBA}                                                        & Medium                                                     \\ \hline
\multicolumn{1}{c|}{Validators}        & Fixed                     & \multicolumn{1}{c|}{TBA}                       & \multicolumn{1}{c|}{Low}                          & Low              & \multicolumn{1}{c|}{High}                & High                    & \multicolumn{1}{c|}{TBA}                                                        & TBA                                                        \\ \hline
\multicolumn{2}{c|}{\textbf{Total}}                                & \multicolumn{1}{c|}{2}                         & \multicolumn{1}{c|}{1}                            & 1                & \multicolumn{1}{c|}{3}                   & 2                       & \multicolumn{1}{c|}{0}                                                          & 0                                                          \\ \hline
\multicolumn{2}{c|}{\textbf{Average}}                              & \multicolumn{3}{c|}{\textbf{4/3}}                                                                                     & \multicolumn{2}{c|}{\textbf{2.5}}                                  & \multicolumn{2}{c}{\textbf{0}}                                                                                                               \\ \hline
\end{tabular}
\end{table*}

We have referenced RAFT several times. Here we use the Consensus Process map to describe the RAFT consensus algorithm as seen in figure \ref{fig:RAFT}. Then we leverage Table \ref{Table:RAFT} to show that the RAFT consensus algorithm can provide CBDC systems good fault-tolerance and performance while it takes no privacy protection measurements.

Overall, the evaluation sub-framework can guide CBDC designers to consider related factors, analyze different combinations, and find a suitable consensus algorithm. However, we need a method to judge whether the combination meets the expectations. Therefore, we need the verification sub-framework to ensure proposed consensus algorithms are valid and satisfactory.

\subsubsection{Operating Architecture}
We propose two new operating architectures based on current CBDC architectures (figure 1). The operating architecture determines how the network functions at a high level. As mentioned before, a trade-off\cite{groupCBDC} between CBDC technical features exists that we can not leverage consensus algorithms to provide excellent performance, security, and privacy at the same time. However, we can update the operating architecture to compensate for weak CBDC technical features. In this part, we use business secrecy as an example.

We concluded a three-tier CBDC architecture (Figure \ref{fig:three-tier}) to describe institutions that do not become tier-1 but want to provide CBDC services. The model describes how tier-1.5 institutions provide CBDC services to their customers in the current situation.

Tier-1.5 institutions have to provide transaction information to tier-1 ones for bookkeeping because tier-1 institutions operate the ledgers. Once tier-1 institution records the transaction in the ledger, the transaction becomes valid. However, tier-1.5 institutions will refuse to provide the customer data to tier-1 ones because they are competitors with interest conflict. Some commercial institutions even give up providing CBDC-related services. 

\begin{figure}[h]
	\centering
	\includegraphics[width=0.500\textwidth]{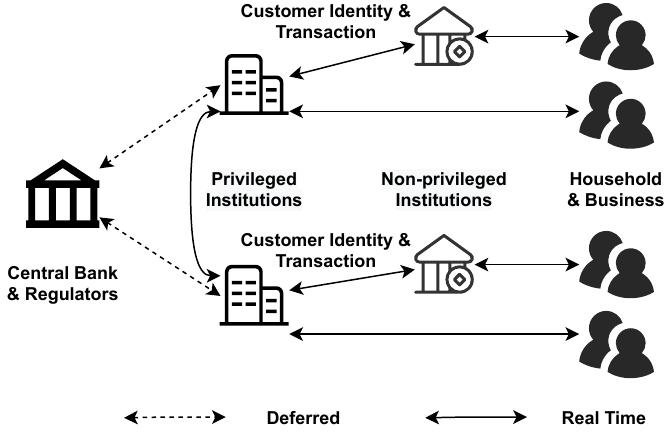} 
	\caption{\textbf{Three-tier CBDC architecture}  \label{fig:three-tier}}
\end{figure}

To safeguard tier-1.5 institutions from data monopoly, we propose two operating architectures:
\begin{enumerate}
    \item Use dynamic virtual addresses to keep the identities of participants secret from tier-1 institutions;
    \item Use an independently operating organization that has no conflict of interest;
\end{enumerate}

\begin{figure}[h]
	\centering
	\includegraphics[width=0.500\textwidth]{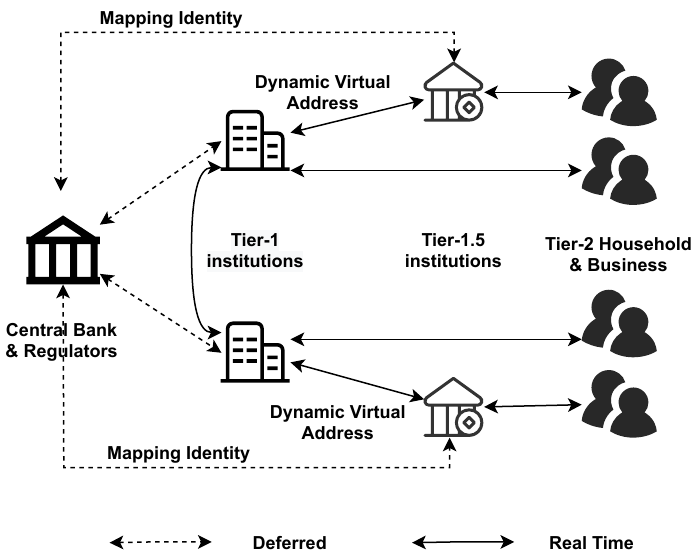} 
	\caption{\textbf{Operating Architecture one leverages dynamic virtual address to avoid tier-1 institutions from knowing the real identities of customers.}  \label{fig:Operating Model Option one}}
\end{figure}

Figure \ref{fig:Operating Model Option one} shows tier-1.5 institutions create virtual addresses for their customers in the tier-1 ledger. Privacy includes identity and transaction information. Virtual addresses ensure that tier-1 institutions have no access to the identity information of the payee and payer. Tier-1.5 institutions provide regulators with a mapping table between virtual addresses and real identities. Only regulators and tier-1.5 institutions can know the mapping relation of virtual addresses to real identities. Tier-1.5 institutions only need to inform the central bank of the identity information. Then the central bank could get transaction information from tier-1 institutions by combining identity mapping and ledger transaction information.

However, tier-1 institutions can infer identity information by analyzing enough token flows and real-world events even though they only know transaction information. To further protect the business secrecy of tier-1.5 institutions, we can make virtual addresses dynamic that tier-1.5 institutions create new virtual addresses to collect changes in transactions. This technical solution can prevent tier-1 institutions from accessing tier-1.5 institutions' customer data further.

There are other methods to improve the operating architecture. For example, figure \ref{fig:Operating Model Option Two} presents a third-party organization in tier-1 to ensure no competition with tier-1s and tier-2s. However, the operating organization needs a feasible business model to ensure enough money to support its stable operation. 
\begin{figure}[h]
	\centering
	\includegraphics[width=0.500\textwidth]{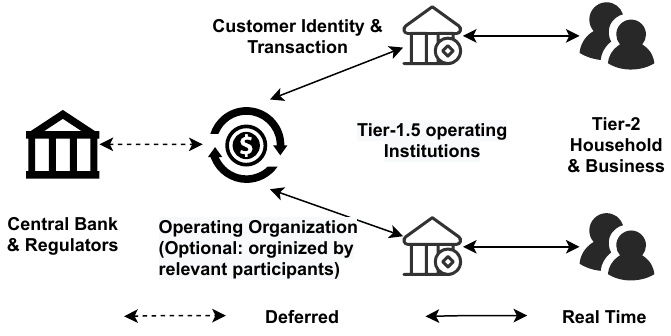}
	\caption{\textbf{Operating Architecture two builds an operating organization rather than tier-1 institutions to operate the CBDC system without interest conflict. The operating organization can know the transaction data of tier-1.5 institutions without business secrecy issues.} \label{fig:Operating Model Option Two}}
\end{figure}

\subsection{Verification Sub-Framework}
By the evaluation sub-framework, CBDC designers can develop a customized solution. Then the verification sub-framework works to ensure the proposed solution's feasibility and rationality. 

The verification sub-framework contains diverse methods to verify proposed solutions. We divide these methods into two categories: empirical experiments and formal proofs. For empirical experiments, we can simulate a real scenario and test parameters in the full load environment. For formal proofs, we can build a mathematical model and infer related theories and find whether they meet initial expectations. 

The verification sub-framework works in the following procedures:
\begin{enumerate}
    \item Model proposed solutions for further verification. The verification sub-framework can guide CBDC designers to describe the proposed solution mathematically. 
    \item Follow the built model to conduct empirical experiments to examine performance.
    \item Follow the built model to conduct formal proofs, mainly for security and privacy.
\end{enumerate} 

\section{CBDC Scenario Example}
Next, we present an example of using the CEV framework. We assume a country with a large population and well-developed technology and communication. The CBDC designers focus on privacy and performance, especially network scalability, latency, business secrecy. Then we leverage the CEV framework to propose a suitable solution for this virtual country.

\subsection{Evaluation}
We leverage the evaluation sub-framework to choose the operating architecture and propose several consensus algorithms. In this example, both operating architectures are feasible to improve business secrecy. So we choose figure \ref{fig:Operating Model Option one} as the operating architecture. Then the evaluation sub-framework can propose consensus algorithms for different consensus networks. 

Since the example emphasizes performance among the three CBDC technical features. We leverage table 1 to find the combinations with relatively high scores in performance for both wholesale and retail consensus networks. 
\begin{figure*}[h]
	\centering
	\includegraphics[width=1.0\textwidth]{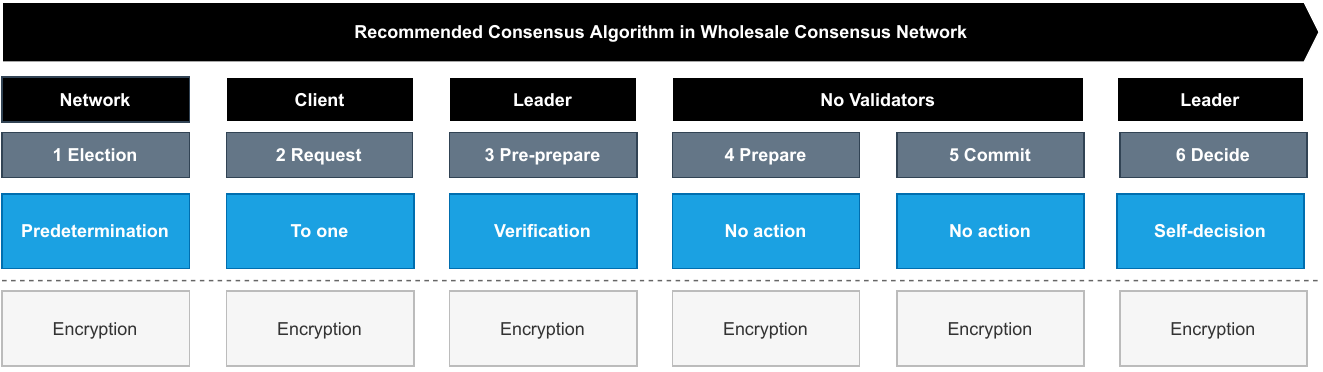}
	\caption{\textbf{Recommended Consensus Algorithm in the Wholesale Consensus Network with following steps: 
	1. The network predetermines the leader nodes that the central bank operates. 2. The client sends a cross-shard transaction request to its sharded leader (tier-1 institution), and then the tier-1 institution forward it to the leader node (central bank) in the wholesale network. 3. The leader node (central bank) verifies the transaction and finishes the related issuance and redemption transactions in different retail networks. Afterwards, the transaction is successful.} \label{fig:Recommended Consensus Algorithm One}}
\end{figure*}

The recommended consensus algorithm in the wholesale consensus network works, and retail consensus networks work in figure \ref{fig:Recommended Consensus Algorithm One} and figure \ref{fig:Recommended Consensus Algorithm Two}, respectively. Table 3 and Table 4 are derived from table 1. Table 3 shows that the recommended consensus algorithm in the wholesale consensus network has a good performance, especially network scalability, but the algorithm may bring a potential security issue. 
Similarly, table 4 shows that the consensus algorithm in the retail consensus network has a good performance and a potential security issue. Both tables present that the system has relative good privacy. The impact table only provides a rough analysis to CBDC designers, so we need to further verify the proposed solution in the verification sub-framework and make sure proposed solutions are objective and reasonable. 

\begin{figure*}[h]
	\centering
	\includegraphics[width=1.0\textwidth]{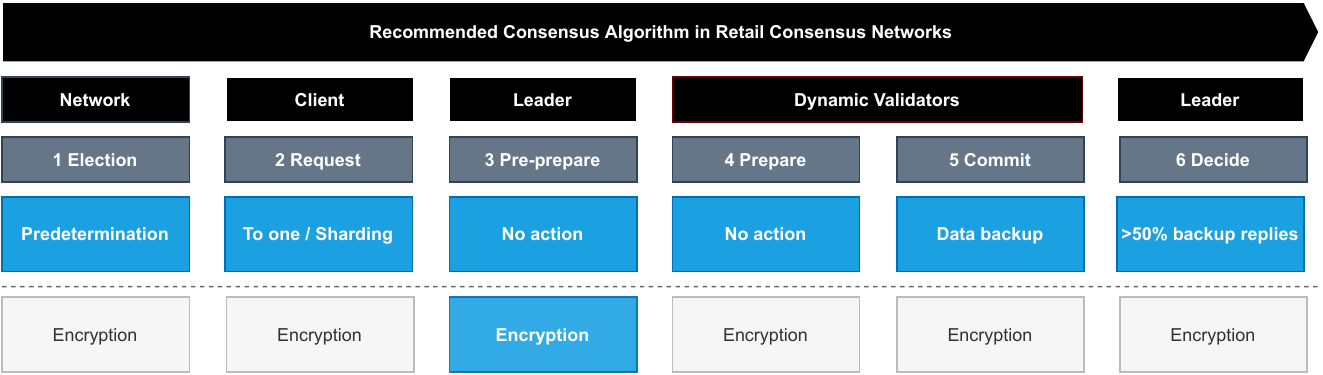}
	\caption{\textbf{Recommended Consensus Algorithm in the Retail Consensus Networks with the following steps:     
	1. The network predetermines the leader nodes that tier-1 institutions run.
    2. The client sends transactions to its sharded leader.
    3. The leader node encrypts transactions before sending them to validators in the network (possible competitors). 
    4. The network leverages a dynamic set of validators to encrypt backup data. 
    5. The leader node sends a success notice to the client after receiving 50 per cent of notices from validators.} \label{fig:Recommended Consensus Algorithm Two}}
\end{figure*}
\begin{table*}[h]
\centering
\caption{\textbf{Impact of the Recommended Consensus Algorithm in the Wholesale Consensus Network. Total adds all values in the table with High (+1), Medium or TBA (0), Low (-1).}}
\label{Table:wholesale}
\begin{tabular}{cc|ccc|cc|cc}
\hline
\multicolumn{2}{c|}{\multirow{2}{*}{\textbf{Modules}}}    & \multicolumn{3}{c|}{\textbf{Performance}}                                                                                                              & \multicolumn{2}{c|}{\textbf{Security}}                             & \multicolumn{2}{c}{\textbf{Privacy}}                                                                                                          \\ \cline{3-9} 
\multicolumn{2}{c|}{}                                     & \multicolumn{1}{c|}{\textbf{User Scalability}} & \multicolumn{1}{c|}{\begin{tabular}[c]{@{}c@{}}\textbf{Network}\\ \textbf{Scalability}\end{tabular}} & \textbf{Latency} & \multicolumn{1}{c|}{\textbf{Resilience}} & \textbf{Cyber-Security} & \multicolumn{1}{c|}{\begin{tabular}[c]{@{}c@{}}\textbf{Customer}\\ \textbf{Privacy}\end{tabular}} & \begin{tabular}[c]{@{}c@{}}\textbf{Business}\\  \textbf{Secrecy}\end{tabular} \\ \hline
\multicolumn{1}{c|}{1 Leader Election} & Predetermination & \multicolumn{1}{c|}{TBA}                       & \multicolumn{1}{c|}{High}                                                          & High             & \multicolumn{1}{c|}{TBA}                 & TBA                     & \multicolumn{1}{c|}{TBA}                                                        & High                                                        \\ \hline
\multicolumn{1}{c|}{2 Request}         & To one           & \multicolumn{1}{c|}{High}                      & \multicolumn{1}{c|}{High}                                                          & Low              & \multicolumn{1}{c|}{TBA}                 & TBA                     & \multicolumn{1}{c|}{Medium}                                                     & TBA                                                         \\ \hline
\multicolumn{1}{c|}{3 Pre-Prepare}     & Verification     & \multicolumn{1}{c|}{High}                      & \multicolumn{1}{c|}{High}                                                          & High             & \multicolumn{1}{c|}{TBA}                 & TBA                     & \multicolumn{1}{c|}{TBA}                                                        & TBA                                                         \\ \hline
\multicolumn{1}{c|}{4 Prepare}         & No action        & \multicolumn{1}{c|}{TBA}                       & \multicolumn{1}{c|}{TBA}                                                           & TBA              & \multicolumn{1}{c|}{TBA}                 & TBA                     & \multicolumn{1}{c|}{TBA}                                                        & TBA                                                         \\ \hline
\multicolumn{1}{c|}{5 Commit}          & No action        & \multicolumn{1}{c|}{TBA}                       & \multicolumn{1}{c|}{TBA}                                                           & TBA              & \multicolumn{1}{c|}{TBA}                 & TBA                     & \multicolumn{1}{c|}{TBA}                                                        & TBA                                                         \\ \hline
\multicolumn{1}{c|}{6 Decide}          & Self-decision    & \multicolumn{1}{c|}{High}                      & \multicolumn{1}{c|}{High}                                                          & TBA              & \multicolumn{1}{c|}{Low}                 & Low                     & \multicolumn{1}{c|}{TBA}                                                        & High                                                        \\ \hline
\multicolumn{1}{c|}{Validators}        & Fixed            & \multicolumn{1}{c|}{TBA}                       & \multicolumn{1}{c|}{Low}                                                           & Low              & \multicolumn{1}{c|}{High}                & High                    & \multicolumn{1}{c|}{TBA}                                                        & TBA                                                         \\ \hline
\multicolumn{2}{c|}{\textbf{Total}}                       & \multicolumn{1}{c|}{3}                         & \multicolumn{1}{c|}{3}                                                             & 0                & \multicolumn{1}{c|}{0}                   & 0                       & \multicolumn{1}{c|}{0}                                                          & 2                                                           \\ \hline
\multicolumn{2}{c|}{\textbf{Average}}                     & \multicolumn{3}{c|}{\textbf{2}}                                                                                                                        & \multicolumn{2}{c|}{\textbf{0}}                                    & \multicolumn{2}{c}{\textbf{1}}                                                                                                                \\ \hline
\end{tabular}
\end{table*}
\begin{table*}[h]
\centering
\caption{\textbf{Impact of the Recommended Consensus Algorithm in the Retail Consensus Network. Total adds all values in the table with High (+1), Medium or TBA (0), Low (-1).}}
\label{Table:Retail}
\begin{tabular}{cc|ccc|cc|cc}
\hline
\multicolumn{2}{c|}{\multirow{2}{*}{\textbf{Modules}}}                 & \multicolumn{3}{c|}{\textbf{Performance}}                                                                                                                                               & \multicolumn{2}{c|}{\textbf{Security}}                             & \multicolumn{2}{c}{\textbf{Privacy}}                                                                                                          \\ \cline{3-9} 
\multicolumn{2}{c|}{}                                                  & \multicolumn{1}{c|}{\begin{tabular}[c]{@{}c@{}}\textbf{User}\\ \textbf{Scalability}\end{tabular}} & \multicolumn{1}{c|}{\begin{tabular}[c]{@{}c@{}}\textbf{Network}\\ \textbf{Scalability}\end{tabular}} & \textbf{Latency} & \multicolumn{1}{c|}{\textbf{Resilience}} & \textbf{Cyber-Security} & \multicolumn{1}{c|}{\begin{tabular}[c]{@{}c@{}}\textbf{Customer}\\ \textbf{Privacy}\end{tabular}} & \begin{tabular}[c]{@{}c@{}}\textbf{Business}\\  \textbf{Secrecy}\end{tabular} \\ \hline
\multicolumn{1}{c|}{1 Election}                     & Predetermination & \multicolumn{1}{c|}{TBA}                                                        & \multicolumn{1}{c|}{High}                                                          & High             & \multicolumn{1}{c|}{TBA}                 & TBA                     & \multicolumn{1}{c|}{TBA}                                                        & High                                                        \\ \hline
\multicolumn{1}{c|}{2 Request}                      & To one           & \multicolumn{1}{c|}{High}                                                       & \multicolumn{1}{c|}{High}                                                          & Low              & \multicolumn{1}{c|}{TBA}                 & TBA                     & \multicolumn{1}{c|}{Medium}                                                     & TBA                                                         \\ \hline
\multicolumn{1}{c|}{\multirow{2}{*}{3 Pre-prepare}} & No action        & \multicolumn{1}{c|}{High}                                                       & \multicolumn{1}{c|}{High}                                                          & High             & \multicolumn{1}{c|}{Medium}              & Low                     & \multicolumn{1}{c|}{Medium}                                                     & Medium                                                      \\ \cline{2-9} 
\multicolumn{1}{c|}{}                               & Encryption       & \multicolumn{1}{c|}{TBA}                                                        & \multicolumn{1}{c|}{TBA}                                                           & Low              & \multicolumn{1}{c|}{TBA}                 & TBA                     & \multicolumn{1}{c|}{High}                                                       & High                                                        \\ \hline
\multicolumn{1}{c|}{4 Prepare}                      & No action        & \multicolumn{1}{c|}{TBA}                                                        & \multicolumn{1}{c|}{TBA}                                                           & TBA              & \multicolumn{1}{c|}{TBA}                 & TBA                     & \multicolumn{1}{c|}{TBA}                                                        & TBA                                                         \\ \hline
\multicolumn{1}{c|}{5 Commit}                       & Data Backup      & \multicolumn{1}{c|}{Medium}                                                     & \multicolumn{1}{c|}{Medium}                                                        & Medium           & \multicolumn{1}{c|}{High}                & TBA                     & \multicolumn{1}{c|}{TBA}                                                        & TBA                                                         \\ \hline
\multicolumn{1}{c|}{6 Decide}                       & Self-decision    & \multicolumn{1}{c|}{High}                                                       & \multicolumn{1}{c|}{High}                                                          & TBA              & \multicolumn{1}{c|}{Low}                 & Low                     & \multicolumn{1}{c|}{TBA}                                                        & High                                                        \\ \hline
\multicolumn{1}{c|}{Validators}                     & Dynamic          & \multicolumn{1}{c|}{TBA}                                                        & \multicolumn{1}{c|}{High}                                                          & High             & \multicolumn{1}{c|}{Medium}              & Medium                  & \multicolumn{1}{c|}{TBA}                                                        & TBA                                                         \\ \hline
\multicolumn{2}{c|}{\textbf{Total}}                                    & \multicolumn{1}{c|}{3}                                                          & \multicolumn{1}{c|}{5}                                                             & 1                & \multicolumn{1}{c|}{0}                   & -2                      & \multicolumn{1}{c|}{1}                                                          & 3                                                           \\ \hline
\multicolumn{2}{c|}{\textbf{Average}}                                  & \multicolumn{3}{c|}{\textbf{3}}                                                                                                                                                         & \multicolumn{2}{c|}{\textbf{-1}}                                   & \multicolumn{2}{c}{\textbf{2}}                                                                                                                \\ \hline
\end{tabular}
\end{table*}

Specifically, encryption in the third step protects business secrecy from validators because validators only backup encrypted data for tampering with proof in the algorithm. Additionally, "To one / Sharding" in the "client-request" step can improve the system's performance in a token-based sharding method. Besides, "Data backup" can increase the resilience of CBDC systems. These impacts on CBDC technical features are derived from the verification sub-framework rather than our subjective idea. Next, we use the verification sub-framework to see whether the proposed solution meets the initial expectations.
\subsection{Verification}
Then we use the verification sub-framework to verify the above-proposed solution by building a theoretical model for the solution and following the model to carry out empirical experiments and formal proofs. 

\begin{figure}[h]
	\centering
	\includegraphics[width=0.500\textwidth]{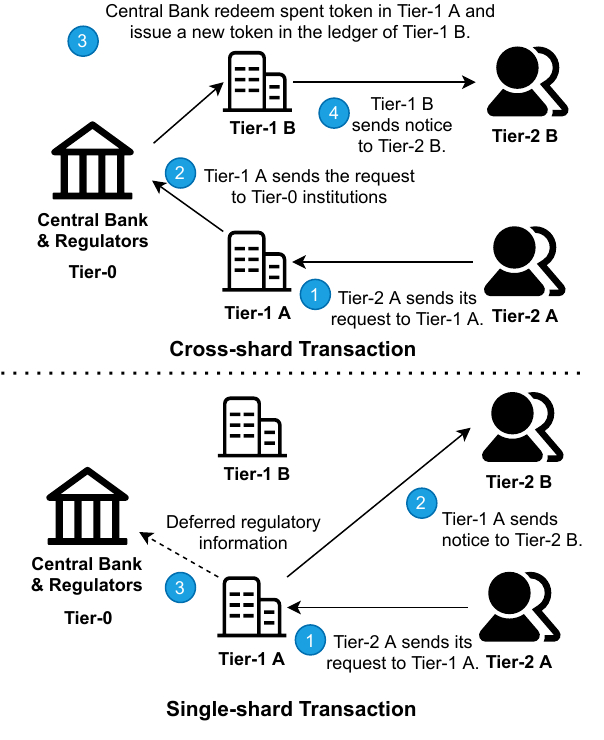}
	\caption{\textbf{Cross-shard transaction describes a transaction with two ledgers involved; single-shard transaction describes a transaction within one ledger.} \label{fig:token-based}}
\end{figure} 

We first introduce the sharding method in our experiment. The previous work \cite{shard1, shard2,ShardPerformance} discusses account-based sharding and token-based sharding. We here introduce them in a CBDC scenario with two types of transactions involved (shown in figure \ref{fig:token-based}).

Account-based sharding divide users by accounts. In the two-tier CBDC architecture, tier-1 institutions have different wallets and divide users by their accounts. For example, Tier-2 A comes to Tier-1 A\footnote{Tier-1 A is an example tier-1 institution.} when using CBDC because Tier-2 A created its account from Tier-1 A. A cross-shard transaction happens if Tier-2 A transfers its money to Tier-1 B's customers. The cross-shard transaction needs the currency issuer (the central bank) to redeem a token in one ledger and issue a new token in another ledger, which decreases the system's performance a lot compared with single-shard transactions\cite{ShardPerformance}. We use token-based sharding as an example.

Token-based sharding divides users by tokens. If Tier-1 A issues a token to Tier-2 A, Tier-2 A comes to Tier-1 A to initiate CBDC related services because of the token (figure \ref{fig:tokendata}) that Tier-1 A operates. Users only need to contact their token service provider and transfer the token to everyone without a cross-shard transaction in a token-based shard mode. However, token-based sharding needs the central bank and operating institutions to provide a uniform interface to distribute transactions. Because if a customer who use Tier-1 A's token come to Tier-1 B's interface, the uniform interface should distribute the transaction to Tier-1 A.

\begin{figure}[h]
	\centering
	\includegraphics[width=0.500\textwidth]{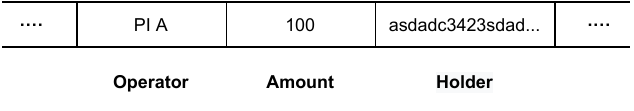}
	\caption{\textbf{Holder is the identifier to determine which tier-1 institutions that tier-2 customers belong to.} \label{fig:tokendata}}
\end{figure} 

\subsubsection{Model}
Next, we built a CBDC state machine for the recommended algorithms to see how the proposed solution functions.

Figure \ref{fig:mathmodel} shows ledger state machine. The model provides a formal description of finite state machine M = (S, V, t). The state machine can describe any state $s \in S$ for every moment. It can read input token $\tau \in V$ and proceed to the next state by different transitions t(s, $\tau$). 

\begin{figure}[h]
	\centering
	\includegraphics[width=0.50\textwidth]{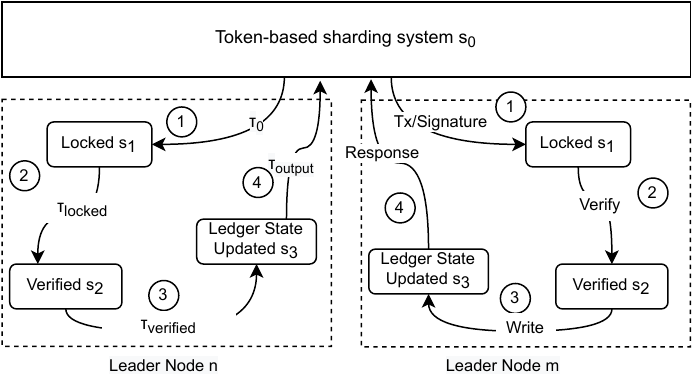}
	\caption{\textbf{State Machine describes how a transaction is being executed, including token's state on the left side and operations on the right side: In the initial machine state $s_{0}$, a token-based sharding CBDC system distributes transactions to different leaders (tier-1 institutions). The leader checks transaction signatures and input tokens $\tau_{0}$. After the shard leader verifies the signature, the tokens become locked. If no leader has previously locked the tokens, the output becomes $\tau_{locked}$ and the machine moves to the locked state $s_{1}$. Otherwise, it would exit (a rolled-back transaction to the initial state). Next, the leader verifies whether $\tau_{locked}$ is available. If so, the output token becomes $\tau_{verified}$, and the machine moves to the verified state $s_{2}$. Otherwise, it would exit. Finally, the leader writes the transaction with inputs and outputs. If successful, the output would be $\tau_{output}$, and the machine moves into the state $s_{3}$ and soon back to $s_{0}$ again.} \label{fig:mathmodel}}
\end{figure} 

Figure \ref{fig:mathmodel} shows how the state machine ensures the leader node records the token in the ledger before noticing clients. However, it only covers operations from the leader node rather than validator nodes. The validators in the consensus algorithm do a data backup, which can help to reduce malicious behaviours. We discuss it in Section IV 4 Security.
\begin{table}
\caption{\textbf{Table of Notations}}
\centering
\begin{tabular}{@{}ll@{}}
\toprule
\textbf{Notation} & \textbf{Definition}  \\ \midrule
$x, y, z$  &  time  \\
$x R y$ & x before y\\ 
$\tau$  & Token   \\
$V$  & Token Set   \\
$p(\tau)$  & The leader has recorded $ \tau$    \\
$Tx(\tau,x)$  &  Transaction with input $\tau$ at time x  \\
$H(\tau, x)$ & $\tau$ has been spent before $\mathrm{x}$   \\
$F(\tau, x)$  & $ \tau$ can be spent after $x$   \\
$r(\tau)$  & Received token in a transaction using $ \tau$ \\
$c(\tau)$  & Change token in a transaction using $ \tau$ \\
$f(\tau)$  & Received token in a cross-shard transaction using $ \tau$ \\
$d_{G}^{+}(\tau)$ & The out-degree of vertex v\\
$d_{G}^{+}(\tau,x)$ & The out-degree of vertex v at time x\\
H  & Leaders execute transactions by state machine flow\\
EE  & Existential Elimination \\
AE  & And Elimination \\
AI  & And Introduction \\
UE  & Universal Elimination \\
UI  & Universal Introduction \\
IE  & Implication Elimination \\
II  & Implication Introduction \\
BE  & Biconditional Elimination \\
Ind & Induction \\
AS & Assumption \\
\bottomrule
\end{tabular}
\end{table}

Figure \ref{fig:figuretheory} shows data model of leaders' ledgers. Transactions are subject to the leader's ledger. Once a leader updates its ledger successfully, the transaction becomes legal and immutable. 

\begin{definition}
\textbf{(Ledger State)} The ledger state of the CBDC is defined as a directed graph D=<V(D), E(D), $\varphi$> that the elements of V(D) are vertices (tokens) and the elements of E(D) are edges (token flow). $\varphi$ is ordered mapping from token set V to token flow set E.
\end{definition}
\begin{figure}[h] 
	\centering
	\includegraphics[width=0.50\textwidth]{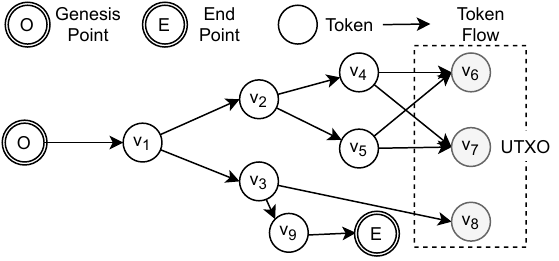}
	\caption{\textbf{Ledger State. v are vertices (tokens), belonging to V, edge represents token flow. On the right side, some tokens without edge coming from form UTXO.} \label{fig:figuretheory}}
\end{figure}

\begin{definition}
\textbf{(UTXO)} $UTXO=\{\tau|\tau\in V(D) \land d_{G}^{+}(\tau)=0\}$.
UTXO is an unspent transaction output set. An unspent token means no edge coming from it (the out-degree of it is 0).
\end{definition}

\begin{definition}
\textbf{(Transaction Graph)} A transaction graph is a directed graph TD=<V(TD), E(TD), $\varphi$>. The in-degree of an input token and the out-degree of an output token are both 0 in any transaction graph. The leader node updates the ledger when the state machine finishes a transaction ($\forall$x. $\forall$ $\tau$.(Tx($\tau$,x) $\Rightarrow$ D = D + TD)).
\end{definition} 

Figure \ref{fig:transactiontype} shows all types of transaction graph. Only currency issuer (Central Bank) can initiate the Initial Issuance Transaction, generating a new token from the genesis point. It also needs to check the Final Redemption Transaction as the transaction receiver. Both types of transactions need the central bank to carry out real-time operations. 

\begin{figure}[h]
	\centering
	\includegraphics[width=0.50\textwidth]{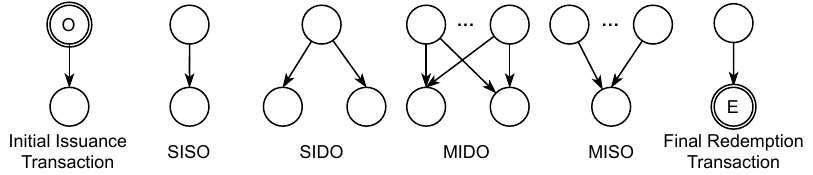}
	\caption{\textbf{Different types of transaction graphs. SI refers to single-input. MI refers to multi-input. SO refers to single-output. DO refers to double-output.} \label{fig:transactiontype}}
\end{figure} 

The following mathematical expressions present the token flows rather than transactions. Here we add an assumption that the leader nodes are non-faulty (H) and follow the model procedures. Faulty nodes may behave arbitrarily and be vulnerable to inside and outside attacks. With non-faulty nodes, we can ensure leader nodes record tokens in the ledger in every transaction:
\begin{enumerate}
    \item  $\forall \tau. \, (H $$\land$$p(\tau) \Rightarrow p(r(\tau)) $$\land$$p(c(\tau)))$
    \item  $\forall \tau. \, (H $$\land$$p(\tau) \Rightarrow p(f(\tau)))$
\end{enumerate}

$\forall \tau. \, (H $$\land$$p(\tau) \Rightarrow p(r(\tau))$ represents that if the ledger has recorded input $\tau$, a non-faulty leader(H) always adds a valid transaction graph to the ledger with inputs of $\tau$ and outputs of received token $r(\tau)$ and change token $c(\tau)$. If the change is 0 ($c(\tau) = null$), p(c($\tau$)) means no token recorded. 

$\forall \tau. \, (H $$\land$$p(\tau) \Rightarrow p(f(\tau)))$ represents central banks and non-faulty leaders(H) ensure that the receiving ledger record the output tokens in a cross-shard transaction (figure \ref{fig:cross-shard transaction}). 

Cross-shard transactions will not happen if we split one into several concurrent sub-transactions. However, cross-shard transactions match some business scenarios and possibly happen. For example, CBDC users may pay tokens in different ledgers in an atomic transaction. Alternatively, CBDC designers want to control the number of tokens and consolidate tokens from different ledgers to one new token. In a CBDC system, the central bank is responsible for issuing and redeeming tokens, regulating both transactions and ensuring that the new shard records output tokens.

\begin{figure}[h]
	\centering
	\includegraphics[width=0.500\textwidth]{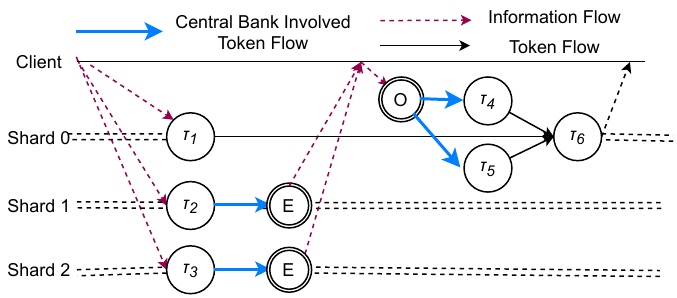}
	\caption{\textbf{Cross-shard Transaction: a client has allocated different tokens in three different shards and used them to initiate a transaction. The transaction first turns the input tokens to the endpoint via the Final Redemption Transaction and issues new tokens in the new shard via the Initial Issuance Transaction.} \label{fig:cross-shard transaction}}
\end{figure} 

\subsubsection{Performance}
Empirical experiments test user scalability, network scalability, and latency. To ensure experiments are close to reality, we randomly initiate user transactions by different payment methods, including face-to-face transfer, collecting, etc.

We leverage AWS EC2 to deploy CBDC networks and carry out the empirical experiments shown in figure \ref{fig:Performance}. Unfortunately, since the Corda open-source version has limitations on transaction volume, we could not demonstrate an extremely large TPS (transaction per second) in the experiment due to cost control. However, the experiment demonstrates performance improvement in a CBDC system.

\begin{figure}[h]
	\centering
	\includegraphics[width=0.50\textwidth]{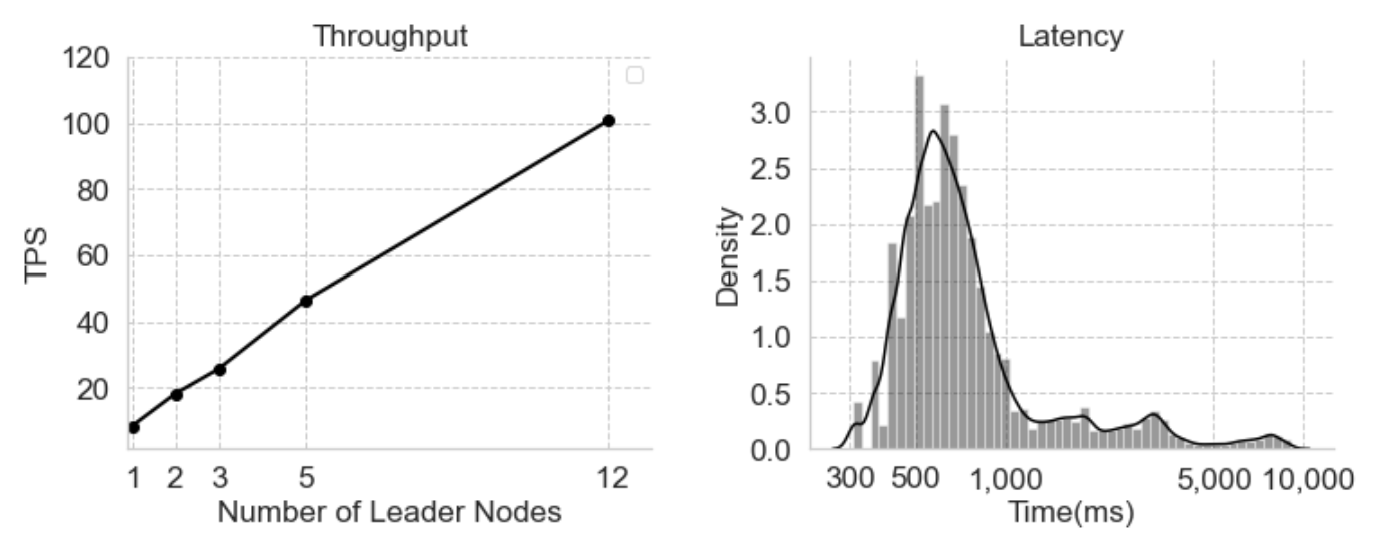} 
	\caption{\textbf{The result on the left figure shows a linearly increasing TPS (Transaction per second), which presents excellent user scalability and network scalability. The right figure shows an acceptable level of latency for most of the transactions.}\label{fig:Performance}}
\end{figure}

Sharding improves network scalability and user scalability. Commercial institutions, including tier-1 and tier-1.5 ones, could become leader nodes or validator nodes and undertake customer due diligence in retail networks. Sharding provides horizontal scalability to a CBDC network. 

If CBDC is non-fungible, token-based sharding can map every non-fungible token to one leader node when the issuer creates the token. Then tokens can be circulated efficiently in different ledgers in parallel, increasing the performance. However, CBDC is more like a fungible token, and every transaction produces a change. As a result, the token owner has to use several small tokens, which causes additional concurrent transactions. Moreover, if the token owner pays two tokens that circulate in different ledgers simultaneously, it must first initiate a cross-shard transaction and follow a single-shard transaction. The recommended consensus algorithm use sharding to improve performance by parallel running ledgers because cross-shard transactions are less frequent in the token-based sharding method than the account-based one. However, more shards may cause worse performance if each transaction in the retail consensus network needs verification from all parties in the wholesale networks. Therefore, the recommended algorithm makes one tier-1 institution decide transactions by itself, in which more shards do not bring worse performance. 

Overall, we prove that the proposed algorithms can increase the system's TPS while not sacrificing latency. 

\subsubsection{Privacy}
We provide two updated operating architectures to protect business secrecy. For option two, we designate one operating organization to distribute CBDC because it is not a competitor with other operating institutions such that their data are safe from monopoly.

For option one, the dynamic virtual addresses can prevent tier-1 institutions from knowing tier-1.5 institutions' customer identity. The method is similar to the bitcoin schema. In the bitcoin \cite{nakamoto2012bitcoin} system, essential facts exist: 1) transactions generate new addresses to collect change; 2) users could have many addresses. Bitcoin uses this method to protect customer privacy from leaking to the public, while our model protects tier-1.5 institutions' data from leaking to tier-1 institutions.

However, like the bitcoin schema\cite{bitcoinprivacy}, tier-1 institutions can still obtain secret information, depending on transaction types. We proved potential information leaking for different transactions types in Appendix A-B. Besides SISO transactions, SIDO, MIDO, MISO transactions may expose relationships between inputs and outputs. 

Therefore, a more aggressive method to protect user privacy is to create virtual entities. In the CBDC operating architecture, tier-1.5 institutions process transaction data before sending it to tier-1 institutions. tier-1.5 institutions can create virtual entities with virtual addresses in the network and use these virtual entities to create SISO transactions for customers. For example, in a SIDO transaction, tier-1.5 institutions can use virtual entities as the receiver to avoid the connection between payer and payee and then send it to the actual receiver via a SISO transaction. Other types of transactions can also apply virtual entities. Enough virtual entities can help tier-1.5 institutions to hide the direct relationship between the payees and payers.

\subsubsection{Security}
We followed the built model and proved that double-spending is possible in single-shard transactions and impossible in cross-shard transactions in Appendix A.

By our proof, we can get that a non-faulty leader prevents double-spending. However, the assumption is the weakest point in the system. The central bank usually does not perform malicious behaviour. So in the wholesale consensus network, the central bank can ensure no fault. However, in the retail consensus network, tier-1 institutions are responsible for their ledger and decide each transaction on its own without validation. No mechanism ensures them non-fault. As a result, double-spending may happen in single-shard transactions. 

The design of CBDC is a trade-off\cite{groupCBDC} between different CBDC technical features, including performance, security and privacy. In this example, one institution decides all transactions by itself, ensuring a high performance but sacrificing security. 

Although we can not avoid double-spending in real-time in the recommended consensus algorithms, we can increase the cost of malicious behaviours. For example, the recommended consensus algorithm in the retail consensus network chooses data backup from validators and the leader sends encrypted transactions to validators. The validators will undertake data backup. Once the leader node changes the original data, we can use the same encryption method to encrypt data and check data consistency. If the leader node performs malicious behaviour, they will be punished. Therefore, the leader nodes are reluctant to do malicious behaviours because the validators can find them easily. Besides, encryption can prevent the validators from accessing customer data, ensuring business secrecy in the CBDC system. In some cases, third-party auditors can run validator nodes, and it is not necessary to encrypt the data. 

On the other side, we can ensure extra verification will not overly influence latency because validators in the network are dynamic, so not all validators need to join the consensus process. Moreover, since the central bank controls issuance and redemption transactions, it knows the balance of money on each ledger so that no extra money comes from retail networks.

\subsection{Framework Iterations}
\begin{figure*}
	\centering
	\includegraphics[width=1.00\textwidth]{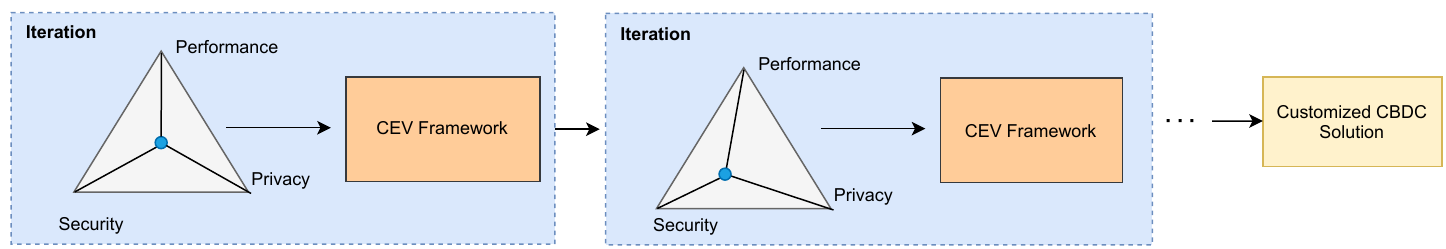}
	\caption{\textbf{Iterations of the CEV Framework until an acceptable balance point} \label{fig:iteration}}
\end{figure*} 
Figure \ref{fig:iteration} shows how we iterated the CEV framework. CBDC design involves many trade-offs\cite{groupCBDC} between different CBDC technical features. We start from a country with a large population, focusing on performance and privacy. Then we use the evaluation sub-framework to propose solutions. Finally, we leverage the verification framework to measure performance, privacy, and security. The proposed solution presents an excellent performance and privacy, but double-spending is possible. 

After verification, CBDC designers can return to CBDC technical features to adjust expectations to improve system security. For example, if they need a more secure system, they could continually use the evaluation sub-framework to propose new solutions and use the verification sub-framework to verify them. In the experiment, if CBDC designers want a real-time check for fault, they can make validators vote for each transaction. Then double-spending will not happen even though the leader node is faulty. The newly proposed solution can also involve more participants voting for single-shard transactions, ensuring no fault in the retail consensus networks. However, it may potentially influence the system's performance. Afterwards, the CBDC designer can return to CBDC technical features again until finding a balance point. The framework presents potential trade-offs in CBDC designs and helps CBDC designers find a balanced solution and meet expectations. 

\section{Conclusion}
Our paper proposes a CBDC framework (CEV Framework), including an evaluation sub-framework and a verification sub-framework to design solutions for CBDC systems. Our work proposes an original approach and potentially promotes the evolution of CBDC. Furthermore, to the best of our knowledge, we are the first to propose a framework to provide a holistic solution for CBDC designers according to their jurisdictions' economic and regulatory conditions. 

Our paper analyzes CBDC technical solutions by splitting consensus algorithms into different components and improving operating architectures to solve CBDC related issues. Most importantly, we build a verification sub-framework to prove the feasibility of the recommended algorithms and operating architectures with rigorous and professional empirical experiments and formal proofs. By using the CEV framework, diverse central bank digital currency projects can better design the consensus algorithms and adopt reasonable operating architectures. 

The framework could be continuously updated and improved by iterating with the workflow. The main future work is to include more CBDC technical features and solutions into the framework and update the impact table to make it more accurate to propose solutions. Besides, the CEV framework needs a more efficient way to verify proposed solutions. Additionally, the CEV framework can also be used to propose stablecoin solutions.

\section*{Acknowledgment}
This article is partially based on their work in the Global CBDC Challenge \cite{b1}, in which they were shortlisted as finalists. Based on the framework in this paper, they built an evaluation and recommendation platform that advocates consensus algorithms and operating architectures for different
CBDC designers. This platform can satisfy different national economic and regulatory conditions. The authors wish to acknowledge the other two teammates, Mark Liu and Bing Qu. They also gratefully acknowledge the help of Bo Tong Xu in fruitful discussions. This paper is reviewed by Dr. Philip
Intallura, Dr. Bing Zhu, and Dr. Ziyuan Li. They also wish to express great appreciation for their valuable input.

\section*{Appendix A Formal Logic Proof}
Here are some temporal logic proofs in the paper. Please see notations in Section IV-B.1.
\subsection{Double-spending Proof}
The premises below come from lemmas or definitions in Section IV. The logic proofs \cite{temporal logic} have been checked by the proof-editor from Stanford University\cite{standford}.
 \begin{lemma}
 A non-issuance transaction makes the out-degrees of the input tokens become non-zero in the ledger state.
 $$ \forall x. \forall \tau.(Tx(\tau,x) \Rightarrow \forall y.(xRy \Rightarrow d_{G}^{+}(\tau,y)!=0).$$
  \end{lemma}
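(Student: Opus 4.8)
The plan is to argue in two stages: first that the out-degree of the input token $\tau$ becomes non-zero at the instant the transaction is recorded, and then that this property persists for every later time because leaders' ledgers are append-only. Note that the premise $Tx(\tau,x)$ already supplies an actual input token $\tau$, so the transaction in question is a non-issuance one (Initial Issuance Transactions generate a token from the genesis point and have no input), which is exactly the scope of the lemma.

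For the first stage I would start from $Tx(\tau,x)$ together with the non-faulty-leader assumption $H$, which guarantees the leader drives the state machine $M = (S, V, t)$ through to its final state $s_{3}$, where the transaction with its inputs and outputs is written into the ledger. By the Transaction Graph definition this write performs the update $D = D + TD$, and every transaction type in the model (SISO, SIDO, MIDO, MISO) produces at least one output token, so within $TD$ the input token $\tau$ has at least one outgoing edge to an output token. Hence in the updated ledger $d_{G}^{+}(\tau,x) \ge 1$, i.e. $d_{G}^{+}(\tau,x) \ne 0$; equivalently, $\tau$ leaves the UTXO set once the transaction completes.

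For the second stage — propagating the conclusion to all later $y$ with $xRy$ — I would use induction (the Ind rule) on the sequence of ledger updates occurring strictly after $x$. The base case is the ledger state immediately after $x$, handled above. For the inductive step, any later transaction changes the ledger only through another update $D \mapsto D + TD'$; since this operation only adjoins vertices and edges and never deletes any (the immutability property stated for leaders' ledgers: a recorded transaction is never rolled back, only offset by a new return transaction), the out-degree of $\tau$ can only stay the same or grow. Therefore $d_{G}^{+}(\tau,y) \ge d_{G}^{+}(\tau,x) \ge 1$ for every $y$ with $xRy$, and discharging the implication and the universal quantifiers (II, UI) yields the statement.

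The main obstacle I anticipate is making the append-only / immutability property precise enough to license the inductive step: one must pin down that the ledger-update operator ``$+$'' is monotone on edge sets (it never removes an edge incident to an already-recorded token) and that $R$ is the transitive ``before'' order over exactly those update steps, so the induction is well-founded. Once that monotonicity is isolated as a small lemma or read off the model's immutability assumption, the remainder is a routine chain of EE/IE/UE/UI steps.
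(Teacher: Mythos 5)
Your proposal is correct and follows essentially the same route as the paper's (much terser) proof: the transaction graph $TD$ added by the update $D = D + TD$ gives the input token an outgoing edge, making its out-degree non-zero at $x$, and since subsequent updates only adjoin further transaction graphs without removing edges, the out-degree stays non-zero for all later $y$. Your explicit treatment of the append-only monotonicity and the induction over later updates is a more careful rendering of what the paper states in one sentence, not a different argument.
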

 \begin{proof}
 Since  ($\forall$x. $\forall$ $\tau$.(Tx($\tau$,x) $\Rightarrow$ D = D + TD)), the ledger operator adds a new transaction graph into the ledger where the out-degrees of input tokens keep same. A successful non-issuance transaction in definition 3 makes the out-degrees of input tokens become non-zero.
 \end{proof}
 
 \begin{lemma}
$\forall$x. $\forall$ $\tau$.(Tx($\tau$,x) $\Rightarrow$ $\forall$y.(xRy $\Rightarrow$ H($\tau$, y)))
  \end{lemma}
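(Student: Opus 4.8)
The plan is to obtain this as an almost immediate corollary of Lemma~1, once the predicate $H$ is tied back to the graph model. The first step is the definitional bridge: in the ledger state (Definition~1) an outgoing edge from a token represents token flow out of it, so a token is \emph{spent} exactly when it has been consumed as an input of some recorded transaction, i.e. when its out-degree is nonzero. Hence I would record the identification $H(\tau,y) \Leftrightarrow d_{G}^{+}(\tau,y)\neq 0$, which lets the conclusion of Lemma~1 be read directly as a statement about $H$.

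With that in hand the derivation follows the natural-deduction skeleton used elsewhere in the paper. Fix arbitrary $x$ and $\tau$ and assume $Tx(\tau,x)$ (AS, for II); fix arbitrary $y$ and assume $xRy$ (AS, for II). Instantiating Lemma~1 at $x,\tau$ (UE) and discharging $Tx(\tau,x)$ (IE) gives $\forall y.(xRy \Rightarrow d_{G}^{+}(\tau,y)\neq 0)$; instantiating at $y$ (UE) and discharging $xRy$ (IE) gives $d_{G}^{+}(\tau,y)\neq 0$, which by the bridge is $H(\tau,y)$. Discharging $xRy$ (II) and generalizing over $y$ (UI) yields $\forall y.(xRy \Rightarrow H(\tau,y))$; discharging $Tx(\tau,x)$ (II) and generalizing over $x$ and $\tau$ (UI) yields the claim.

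The only real content — and thus the main obstacle — is justifying that the conclusion of Lemma~1 genuinely persists for \emph{all} later $y$, which is where the append-only nature of the ledger enters: by Definition~3 a ledger update only adds a transaction graph ($D = D + TD$) and never removes edges, so the edge created out of $\tau$ at time $x$ survives in every later state. I would state this persistence observation explicitly (it is already implicit in the proof of Lemma~1), use it to license reading $d_{G}^{+}(\tau,y)\neq 0$ as $H(\tau,y)$ for every $y$ with $xRy$, and keep the rest terse. One minor point deserves a sentence: if $\tau$ is an input of $Tx(\tau,x)$ then $Tx$ is necessarily a non-issuance transaction (issuance transactions draw their output from the genesis point and have no input token), so Lemma~1 applies with no extra case split.
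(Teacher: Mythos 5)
Your proposal is correct and follows essentially the same route as the paper: the paper's derivation likewise takes Lemma~1 together with the bridging premise $\forall y.\forall\tau.(H(\tau,y)\Leftrightarrow d_{G}^{+}(\tau,y)\neq 0)$ and runs exactly the UE/AS/IE/BE/II/UI natural-deduction skeleton you describe. Your added remarks on edge persistence under $D=D+TD$ and on the non-issuance caveat are sensible glosses on points the paper leaves implicit in Lemma~1, not a different argument.
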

\begin{proof}
Lemma 1 shows $\forall$x. $\forall$ $\tau.$(Tx($\tau$,x) $\Rightarrow$ $\forall$y.(xRy $\Rightarrow$  $d_{G}^{+}(\tau,y)!=0$). 

\footnotesize
\begin{logicproof}{1}
 $$\forall x. \forall \tau.(Tx(\tau,x) \Rightarrow \forall y.(xRy \Rightarrow  d_{G}^{+}(\tau,y)!= 0$$)) & Premise\\
 \forall y.\forall\tau.(H(\tau, y) $$\Leftrightarrow$$ d_{G}^{+}(\tau,y)!$$=$$0) & Premise\\
 Tx(\tau,x) $$\Rightarrow$$ \forall y.(xRy $$\Rightarrow$$ d_{G}^{+}(\tau,y)!$$=$$0) & UE: 1\\
 H(\tau, y) $$\Leftrightarrow$$ d_{G}^{+}(\tau,y)!$$=$$0 & UE: 2 \\
 Tx(\tau,x) & AS \\
 \forall y.(xRy $$\Rightarrow$$ d_{G}^{+}(\tau,y)!$$=$$0) & IE: 3,5\\
 xRy $$\Rightarrow$$ d_{G}^{+}(\tau,y)!$$=$$0 & UE: 6\\
  \begin{subproof}
 xRy & AS \\
 d_{G}^{+}(\tau,y)!$$=$$0 & IE: 7,8\\
 d_{G}^{+}(\tau,y)!$$=$$0 $$\Rightarrow$$ H(\tau, y) & BE: 4\\
 H(\tau, y) & IE: 9,10
   \end{subproof}
 xRy $$\Rightarrow$$ H(\tau,y) & II: 8,11\\

 \forall y.(xRy $$\Rightarrow$$  H(\tau,y)) & UI: 12 \\

 Tx(\tau,x) $$\Rightarrow$$\forall y.(xRy $$\Rightarrow$$  H(\tau,y)) & II: 5,13\\
 \forall x. \forall \tau.(Tx(\tau,x) $$\Rightarrow$$\forall y.(xRy $$\Rightarrow$$ H(\tau,y))) & UI: 14
\end{logicproof}
\end{proof}

 \begin{lemma}
 $$ \forall x.\forall \tau.(\mathrm{Tx}(\tau,x)) \Rightarrow \forall z.(zRx \Rightarrow F(\tau, z))) $$
 \end{lemma}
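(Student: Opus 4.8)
The statement to prove is $\forall x.\forall \tau.(\mathrm{Tx}(\tau,x) \Rightarrow \forall z.(zRx \Rightarrow F(\tau, z)))$, which says that whenever a transaction using input token $\tau$ occurs at time $x$, then at every earlier time $z$ the token $\tau$ was spendable, i.e. $F(\tau,z)$ held. This is in a sense the temporal mirror of Lemma 4 ($\mathrm{Tx}(\tau,x)$ implies $H(\tau,y)$ for all later $y$): once a token is spent, it is spent forever afterwards; before it is spent, it must have been available the whole time. The plan is to derive $F(\tau,z)$ from the premise characterizing $F$ in terms of the graph-theoretic state of the ledger, analogous to the biconditional $H(\tau,y) \Leftrightarrow d_G^{+}(\tau,y)\neq 0$ used in Lemma 4.

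First I would identify the governing premise: presumably there is a definitional biconditional of the form $\forall z.\forall\tau.(F(\tau,z) \Leftrightarrow d_G^{+}(\tau,z)=0)$ — a token can be spent after $z$ exactly when it is still unspent (out-degree zero) at time $z$ — together with a premise capturing that a token used as a transaction input at time $x$ must have had out-degree zero at all times strictly before $x$ (it was a UTXO up to the moment of spending). This second fact follows from Lemma 1 read contrapositively in time: Lemma 1 says the out-degree becomes nonzero at all times $y$ with $xRy$; conversely, since ledger updates only add transaction graphs and the token is an input of the transaction finalized at $x$, its out-degree was $0$ for all $z$ with $zRx$. Then the formal derivation mirrors the \texttt{logicproof} block of Lemma 4: universal elimination to instantiate both premises at $\tau$, assume $\mathrm{Tx}(\tau,x)$, derive $\forall z.(zRx \Rightarrow d_G^{+}(\tau,z)=0)$, open a subproof assuming $zRx$, apply implication elimination to get $d_G^{+}(\tau,z)=0$, apply biconditional elimination to the $F$-premise to get $d_G^{+}(\tau,z)=0 \Rightarrow F(\tau,z)$, conclude $F(\tau,z)$, then close the subproof with implication introduction, universal introduction over $z$, implication introduction discharging $\mathrm{Tx}(\tau,x)$, and finally two universal introductions over $\tau$ and $x$.

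The main obstacle I anticipate is establishing the intermediate claim that $d_G^{+}(\tau,z)=0$ for every $z$ with $zRx$ — that is, that an input token of a transaction at time $x$ was genuinely unspent at all prior times. Lemma 1 and Definition 3 only directly tell us how the out-degree changes forward in time ($D = D + TD$ upon transaction completion); arguing backward requires an implicit invariant that a token can serve as a transaction input only if its out-degree is currently zero (the state machine's verification step checking $\tau_{locked}$ is available in the ledger). If the paper has stated this as a prior lemma or as part of the model's premises, the proof is a routine temporal-logic derivation paralleling Lemma 4; if not, this backward-monotonicity fact is the real content and would itself need justification from the state-machine semantics in Figure~\ref{fig:mathmodel}. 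I would flag this dependency explicitly and, if needed, insert it as an auxiliary premise drawn from the model definitions, just as Lemma 2's proof silently used the $H \Leftrightarrow d_G^{+}\neq 0$ premise.
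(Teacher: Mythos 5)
Your proposal matches the paper's proof of this lemma (Lemma~3, proved in Appendix~A) essentially step for step: the paper takes as premises exactly the biconditional $\forall y.\forall\tau.(F(\tau,y) \Leftrightarrow d_{G}^{+}(\tau,y)=0)$ and the backward claim $\forall x.\forall\tau.(Tx(\tau,x) \Rightarrow \forall y.(yRx \Rightarrow d_{G}^{+}(\tau,y)=0))$, justified only by the phrase ``from our definition in transaction,'' and then runs the same UE/AS/IE/subproof/BE/II/UI derivation mirroring the earlier $H$-lemma. Your flagged concern is apt --- the paper does precisely what you anticipated, inserting the backward-monotonicity fact as an unproved premise drawn from the model rather than deriving it from the state-machine semantics --- but this is not a divergence from the paper's route, so no further comparison is needed.
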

\begin{proof}
In the model, one token $\tau$ keeps unspent status when it has not been involved in any transaction. For any $ \tau \in V, F(\tau, y) \Leftrightarrow (\tau \in$ UTXO at time y) $\Leftrightarrow d_{G}^{+}(\tau,y)=0$. From the definition in transaction, we get
$\forall$x. $\forall$ $\tau$.(Tx($\tau$,x) $\Rightarrow$ $\forall$y.(yRx $\Rightarrow$  $d_{G}^{+}(\tau,y)=0$).

\footnotesize
\begin{logicproof}{1}
 $$\forall x. \forall \tau.(Tx(\tau,x) \Rightarrow \forall y.(yRx \Rightarrow  d_{G}^{+}(\tau,y)= 0$$)) & Premise\\
 \forall y.\forall\tau.(F(\tau, y) $$\Leftrightarrow$$ d_{G}^{+}(\tau,y)$$=$$0 )& Premise\\
 Tx(\tau,x) $$\Rightarrow$$ \forall y.(yRx $$\Rightarrow$$ d_{G}^{+}(\tau,y)$$=$$0) & UE: 1\\
 F(\tau, y) $$\Leftrightarrow$$ d_{G}^{+}(\tau,y)$$=$$0 & UE: 2 \\
 Tx(\tau,x) & AS \\
 \forall y.(yRx $$\Rightarrow$$ d_{G}^{+}(\tau,y)$$=$$0) & IE: 3,5\\
 yRx $$\Rightarrow$$ d_{G}^{+}(\tau,y)$$=$$0 & UE: 6\\
 \begin{subproof}
 yRx & AS \\
 d_{G}^{+}(\tau,y)$$=$$0 & IE: 7,8\\
 d_{G}^{+}(\tau,y)$$=$$0 $$\Rightarrow$$ F(\tau, y) & BE: 4\\
  F(\tau, y) & IE: 9,10 
    \end{subproof}
  yRx $$\Rightarrow$$ F(\tau,y) & II: 8,11\\
  \forall y.(yRx $$\Rightarrow$$ F(\tau, y) ) & UI: 12\\
  Tx(\tau,x) $$\Rightarrow$$\forall y.(yRx $$\Rightarrow$$  F(\tau, y)) & II: 5,13\\
  \forall x. \forall \tau.(Tx(\tau,x) $$\Rightarrow$$\forall y.(yRx $$\Rightarrow$$  F(\tau, y))) & UI: 14
 
\end{logicproof}
\end{proof}

 \begin{lemma}
 $$ \forall x.\forall \tau.(F(\tau,x) \Leftrightarrow \neg H(\tau,x)) $$
 \end{lemma}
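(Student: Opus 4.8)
The plan is to prove the biconditional by unfolding both sides through their common grounding in the out-degree of $\tau$ in the ledger state graph. Recall from the model — and exactly as it appears as a premise in the derivation of the previous lemma — that $H$ is characterized by $H(\tau,x) \Leftrightarrow d_{G}^{+}(\tau,x) \neq 0$: a token has been spent before $x$ precisely when it already has an outgoing edge in the ledger graph at time $x$. Dually, the ``spendable after $x$'' predicate is grounded in the UTXO condition of Definition 2, i.e. $F(\tau,x) \Leftrightarrow d_{G}^{+}(\tau,x) = 0$: a token can still be spent after $x$ exactly when it is currently an unspent token. I would take these two characterizations as the premises of the proof, fixing an arbitrary $\tau$ and $x$.

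Given those premises the argument is short. First I would record the purely arithmetic fact $\big(d_{G}^{+}(\tau,x) = 0\big) \Leftrightarrow \neg\big(d_{G}^{+}(\tau,x) \neq 0\big)$, which holds because the out-degree is a natural number and the two conditions are complementary. Then, by biconditional elimination on the two premises and chaining through this arithmetic equivalence, I obtain $F(\tau,x) \Leftrightarrow \neg H(\tau,x)$ for the fixed $\tau$ and $x$; two applications of universal introduction over $\tau$ and $x$ yield the stated closed formula. In the proof-editor format used for the surrounding lemmas this is just the usual BE / IE / II / BI / UI bookkeeping, structurally parallel to the Lemma~2 derivation, so it can be discharged mechanically.

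The main obstacle is not the logical manipulation, which is routine, but pinning down the intended semantic reading of $F$: the argument goes through cleanly only if $F(\tau,x)$ means ``$\tau$ is an unspent token at time $x$'' (equivalently $\tau \in UTXO$ at time $x$), rather than something strictly stronger such as ``$\tau$ has been issued and is unspent''. Under the stronger reading the forward direction $F(\tau,x) \Rightarrow \neg H(\tau,x)$ is unaffected, but the converse $\neg H(\tau,x) \Rightarrow F(\tau,x)$ would additionally require the side condition $\tau \in V(D)$ — i.e. that we only ever reason about tokens that actually exist in some leader's ledger, which is implicit throughout the model. I would therefore state at the top of the proof that the quantifier over $\tau$ ranges over tokens in $V(D)$, so that the two out-degree characterizations are genuinely complementary and the biconditional closes without further hypotheses.
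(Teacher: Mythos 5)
Your proposal is correct and follows essentially the same route as the paper: both sides are unfolded into the out-degree characterizations $H(\tau,x) \Leftrightarrow d_{G}^{+}(\tau,x)\neq 0$ and $F(\tau,x) \Leftrightarrow d_{G}^{+}(\tau,x)=0$, and the biconditional is closed by the complementarity of these two conditions. In fact your version is more careful than the paper's own formal derivation, which lists \emph{both} premises with $d_{G}^{+}(\tau,x)\neq 0$ on the right-hand side (contradicting the $F(\tau,y)\Leftrightarrow d_{G}^{+}(\tau,y)=0$ characterization used in the proof of Lemma~3) and consequently ends at $\forall x.\forall\tau.(F(\tau,x)\Leftrightarrow H(\tau,x))$ rather than the stated $F(\tau,x)\Leftrightarrow\neg H(\tau,x)$; your explicit arithmetic step $\bigl(d_{G}^{+}(\tau,x)=0\bigr)\Leftrightarrow\neg\bigl(d_{G}^{+}(\tau,x)\neq 0\bigr)$ is exactly what is needed to repair that. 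Your remark about restricting the quantifier to $\tau\in V(D)$ is a reasonable precaution, though the paper implicitly assumes this throughout and does not address it.
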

\begin{proof}
$\neg H( \tau, x)$ means $\tau$ has not been spent before x. Therefore, F($\tau$,x) $\Leftrightarrow$  $d_{G}^{+}(\tau,x)!=0$ $\Leftrightarrow$ $\neg H( \tau, x)$.

\footnotesize
\begin{logicproof}{1}
 \forall x. \forall \tau.(F(\tau,x)$$\Leftrightarrow$$ d_{G}^{+}(\tau,x)!$$=$$0 ) & Premise\\
  \forall x. \forall \tau.(H(\tau,x)$$\Leftrightarrow$$ d_{G}^{+}(\tau,x)!$$=$$0 ) & Premise\\
\forall \tau.(F(\tau,x)$$\Leftrightarrow$$ d_{G}^{+}(\tau,x)!$$=$$0 ) & UE:1 \\
F(\tau,x)$$\Leftrightarrow$$ d_{G}^{+}(\tau,x)!$$=$$0  & UE:3 \\
\forall \tau.(H(\tau,x)$$\Leftrightarrow$$ d_{G}^{+}(\tau,x)!$$=$$0 ) & UE:2 \\
H(\tau,x)$$\Leftrightarrow$$ d_{G}^{+}(\tau,x)!$$=$$0  & UE:5 \\
F(\tau,x)$$\Rightarrow$$ d_{G}^{+}(\tau,x)!$$=$$0  & BE:4 \\
d_{G}^{+}(\tau,x)!$$=$$0 $$\Rightarrow$$ F(\tau,x)   & BE:4 \\
H(\tau,x)$$\Rightarrow$$ d_{G}^{+}(\tau,x)!$$=$$0  & BE:6 \\
d_{G}^{+}(\tau,x)!$$=$$0 $$\Rightarrow$$ H(\tau,x)   & BE:6 \\
 \begin{subproof}
F(\tau,x)   & AS \\
d_{G}^{+}(\tau,x)!$$=$$0 & IE:7,11\\
H(\tau,x)  & IE: 10, 12 
\end{subproof}
F(\tau,x)$$\Rightarrow$$H(\tau,x) & II:11,13 \\
\begin{subproof}
H(\tau,x)  & AS\\
d_{G}^{+}(\tau,x)!$$=$$0 & IE:9,15\\
F(\tau,x)   & IE:8,16 
\end{subproof}
H(\tau,x)$$\Rightarrow$$ F(\tau,x) & II:15,17 \\
F(\tau,x)$$\Leftrightarrow$$ H(\tau,x) & BI:14,18 \\
\forall \tau.(F(\tau,x)$$\Leftrightarrow$$ H(\tau,x)) & UI:19 \\
\forall x.\forall \tau.(F(\tau,x)$$\Leftrightarrow$$ H(\tau,x)) & UI:20 
\end{logicproof}
\end{proof}

\begin{lemma}

 $$\forall x. \forall y.(Tx(\tau_{1},x) \land Tx(\tau_{2},y) \land x R y \Rightarrow \tau_{1} \neq \tau_{2})$$
 \end{lemma}
\begin{proof}
Time is continuous that one timestamp always exists between any two timestamps. we assume a double spending transaction possible as one premise and find a contradiction.

\footnotesize
\begin{logicproof}{1}
\forall x. \forall \tau.(\mathrm{Tx}(\tau,x) \Rightarrow \forall y.(x R y $$\Rightarrow$$ H(\tau, y))) & Premise \\
\forall x. \forall \tau.(\mathrm{Tx}(\tau,x)) \Rightarrow \forall z.(z R x $$\Rightarrow$$ F(\tau, z))) & Premise \\
 \forall x.\forall \tau.(F(\tau,x) $$\Leftrightarrow$$ \neg H( \tau,x)) & Premise\\
 \forall x\forall y.(xRy $$\Rightarrow$$ (\exists z. (xRz $$\land$$zRy))) & Premise\\
 \exists x.\exists y. (xRy$$\land$$Tx(\tau,x), Tx(\tau,y)) & goal\\
 \exists y.([x]Ry $$\land$$Tx(\tau,[x]), Tx(\tau,y)) & EE: 5\\
 [x]R[y] $$\land$$Tx(\tau,[x]), Tx(\tau,[y]) & EE: 6\\
 [x]R[y] & AE: 7\\
Tx(\tau,[x]) & AE: 7\\
Tx(\tau,[y]) & AE: 7\\
\forall y.([x]Ry $$\Rightarrow$$ (\exists z. ([x]Rz $$\land$$zRy)) & UE: 4\\
[x]R[y] $$\Rightarrow$$ (\exists z. ([x]Rz $$\land$$ zR[y])) & UE: 11\\
\exists z. ([x]Rz $$\land$$zR[y]) & IE: 8,12\\
[x]R[z] $$\land$$[z]R[y] & EE: 13 \\
[x]R[z] & AE: 14 \\
[z]R[y] & AE: 14 \\
 \forall \tau.(\mathrm{Tx}(\tau,[x]) \Rightarrow \forall y.([x] R y$$ \Rightarrow$$ H(\tau, y))) & UE: 1\\
\forall \tau.(\mathrm{Tx}(\tau,[y]) \Rightarrow \forall z.(z R [y] \Rightarrow F(\tau, z))) & UE: 2\\
\mathrm{Tx}(\tau,[x]) $$\Rightarrow$$ \forall y.([x] R y $$\Rightarrow$$ H(\tau, y)) & UE: 17\\
\mathrm{Tx}(\tau,[y]) $$\Rightarrow$$ \forall z.(z R [y] $$\Rightarrow$$ F(\tau, z)) & UE: 18\\
 \forall y.([x] R y$$ \Rightarrow$$ H(\tau, y)) & IE: 9,19 \\
 \forall z.(z R [y] $$\Rightarrow$$ F(\tau, z)) & IE: 10,20 \\
 [x] R [z] $$\Rightarrow$$ H(\tau, [z]) & UE: 21 \\
 [z] R [y] $$\Rightarrow$$ F(\tau, [z]) & UE: 22 \\
 H(\tau, [z])) & IE: 15,23\\
 F(\tau, [z])) & IE: 16,24\\
 \forall \tau.(F(\tau,[z]) $$\Leftrightarrow$$ \neg H( \tau,[z]) & UE: 3 \\
 F(\tau,[z]) $$\Leftrightarrow$$ \neg H( \tau,[z]) & UE: 27 \\
 F(\tau,[z]) $$\Rightarrow$$ \neg H( \tau,[z]) & BE: 28 \\ 
 \neg H( \tau,[z]) & IE: 26,29 \\
 Contradiction & 25,30
\end{logicproof}
With proof by contradiction, we get that a recorded token in the validator's ledger can not be spent twice in different transactions.
\end{proof}

\begin{lemma}
Assume a leader is non-faulty(H), double-spending will not happen in its shard.\end{lemma}
\begin{proof}
 In a token chain, the issuer creates and issues tokens (a) with the in-degree of 0. For a valid payment transaction, a non-faulty leader ensures itself record the received token and change token in a valid transaction.
 
\footnotesize
\begin{logicproof}{1}
 p(a) & Premise
 \\
 \forall \tau. \, (H $$\land$$p(\tau) $$\Rightarrow$$ p(r(\tau)) $$\land$$ p(c(\tau))) & Premise\\
 H $$\land$$p(\tau) $$\Rightarrow$$ p(r(\tau)) $$\land$$p(c(\tau)) & UE: 2 \\
 \begin{subproof}
 H & AS \\
  p(\tau) & AS \\
  H $$\land$$p(\tau) & AI: 4,5\\
 p(r(\tau)) $$\land$$ p(c(\tau)) & IE: 3,6 \\
 p(r(\tau)) & AE: 7\\ 
 p(\tau) $$\Rightarrow$$ p(r(\tau)) & II: 5,8\\
 \forall \tau. \, ( p(\tau)$$ \Rightarrow$$ p(r(\tau)) & UI: 9
 \end{subproof}
 
 H $$\Rightarrow$$ (\forall \tau. \, (p(\tau)$$\Rightarrow$$ p(r(\tau)))) & II: 4,10\\
 \begin{subproof}
 H & AS \\
  p(\tau) & AS \\
  H $$\land$$p(\tau) & AI: 12,13\\
 p(r(\tau)) $$\land$$p(c(\tau)) & IE: 3,12 \\
 p(c(\tau)) & AE: 15\\ 
 p(\tau) $$\Rightarrow $$p(c(\tau)) & II: 13,16\\
 \forall \tau. \, ( p(\tau) $$\Rightarrow $$p(c(\tau)) & UI: 17 
 \end{subproof}
 H $$\Rightarrow$$ (\forall \tau. \, (p(\tau)$$ \Rightarrow$$ p(c(\tau)))) & II: 12,18 \\
 \begin{subproof}
 H & AS \\
 \forall \tau. \, (p(\tau) $$\Rightarrow$$ p(r(\tau))) &IE: 11,20 \\
 \forall \tau. \, (p(\tau) $$\Rightarrow $$p(c(\tau))) &IE: 19,20\\
 \forall \tau. \, (p(\tau)) &	Ind: 1, 21, 22
 \end{subproof}
 H $$\Rightarrow$$ (\forall \tau. \, p(\tau)) &	II: 20,23
\end{logicproof}
\end{proof}

\begin{lemma}
If leaders are non-faulty(H), double-spending will not happen in cross-shard transactions.
\end{lemma}
\begin{proof}
The model shows that a non-faulty leader with the central bank ensures that the leader records new tokens in the ledger. Lemma 5 proves a recorded token has no double-spending problems. Since the leader node records the tokens in the ledger successfully, no double-spending exist in a cross-shard transaction. In our model, the currency issuer (Central Bank) secures the Initial Issuance Transaction and the Final Redemption Transaction and makes sure the token recorded.
\end{proof}

\begin{lemma}
If leaders are non-faulty(H), double-spending will not happen in all transactions.
\end{lemma}
\begin{proof}
There are two kinds of transactions in the network: single-shard transactions and cross-shard transactions. Lemma 6 and 7 prove no double-spending with non-faulty leaders in these two kinds of transactions. Therefore, we conclude no double-spending problem in the network if leaders are non-faulty.
\end{proof}

\subsection{Privacy Proof}
\begin{figure}[h]
	\centering
	\includegraphics[width=0.50\textwidth]{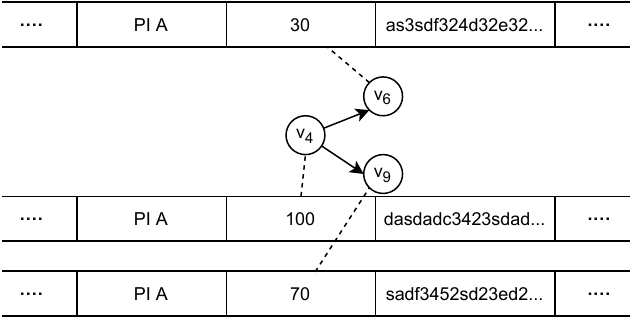}
	\caption{\textbf{One of $v_{6}$ and $v_{9}$ is the change money back to the owner of $v_{4}$. The relationship between payees and payers could be inferred when collecting enough extra data, like goods, transaction places.} \label{fig:privacyproof}}
\end{figure} 

\begin{lemma}
SISO transaction protects the identity relationship between payers and payees the most.
\end{lemma}
\begin{proof}
Here are all types of transactions and their privacy protection capability. 
\begin{enumerate}
    \item SIDO transaction: figure \ref{fig:privacyproof} shows a SIDO transaction, in which one of the output tokens should be the change token back to the payer.
    \item MISO transaction: the payers pays several tokens from different virtual addresses. Then these virtual addresses possibly belong to one person.
    \item MIDO transaction: the payers are usually the same person.
    \item SISO transaction: the payees and payers usually are not the same people. 
\end{enumerate}
Overall, we concluded that SISO transaction can protect identity privacy.
\end{proof}

\end{document}